\newtheorem{theorem}{Theorem}[section]
\newtheorem{lemma}[theorem]{Lemma}
\newtheorem{corollary}[theorem]{Corollary}
\newtheorem{proposition}[theorem]{Proposition}
\theoremstyle{plain}
\newtheorem{definition}[theorem]{Definition}
\newcommand{\bbR}{\mathbb{R}}
\newcommand{\bbN}{\mathbb{N}}
\newcommand{\bbZ}{\mathbb{Z}}
\newcommand{\bv}{{\bm v}}
\newcommand{\bx}{{\bm x}}
\newcommand{\by}{{\bm y}}
\newcommand{\bz}{{\bm z}}
\newcommand{\bb}{{\bm b}}
\newcommand{\bzero}{{\bm 0}}
\newcommand{\bone}{{\bm 1}}
\newcommand{\btheta}{{\bm \theta}}
\newcommand{\caS}{\mathcal{S}}
\newcommand{\set}[1]{\{#1\}}
\newcommand{\E}{\mathop{\mathbf{E}}}
\newcommand{\poly}{\mathrm{poly}}
\newcommand{\polylog}{\mathrm{polylog}}
\newcommand{\compsmall}{n^4+n^2\log (n/\delta )/\epsilon^2}
\newcommand{\compgeneral}{\frac{nm^2}{\epsilon^3}\log \frac{n m}{\epsilon \delta}+\frac{n^4}{\epsilon}+\frac{n^2}{\epsilon^3}\log \frac{1}{\epsilon \delta}}
\newcommand{\compgeneralM}{\frac{nM^2}{\epsilon^3}\log \frac{n M}{\epsilon}+\frac{n^4}{\epsilon}+\frac{n^2}{\epsilon^3}\log \frac{1}{\epsilon}}
\newcommand{\compgl}{O\Bigl(\frac{n^4\polylog(n)}{(1-c_g)^2}\Bigr) \cdot  \Bigl(\frac{1}{\epsilon}\log \frac{1}{1-c_g} \Bigr)^{\poly(1/\epsilon)/(1-c_g)}}
\newcommand{\compf}{O\Bigl(\frac{n^4\polylog(n)}{(1-c_f)^2}\Bigr) \cdot  \Bigl(\frac{1}{\epsilon}\log \frac{1}{1-c_f} \Bigr)^{\poly(1/\epsilon)/(1-c_f)}}
\newcommand{\compfsimple}{O\Bigl(n^4\polylog(n) \cdot  \bigl(\frac{1}{\epsilon} \bigr)^{\poly(1/\epsilon)}\Bigr)}
\newcommand{\compintro}{O\Bigl(n^5+n^4\polylog(n) \cdot  \bigl(\frac{1}{\epsilon} \bigr)^{\poly(1/\epsilon)}\Bigr)}
\newcommand{\Vondrak}{Vondr{\'a}k\xspace}
\begin{document}

\title{Maximizing a Monotone Submodular Function with a Bounded Curvature under a Knapsack Constraint}

\author{
  Yuichi Yoshida\footnote{Supported by JSPS Grant-in-Aid for Young Scientists (B) (26730009), MEXT Grant-in-Aid for Scientific Research on Innovative Areas (24106003), and JST, ERATO, Kawarabayashi Large Graph Project.}\\
  National Institute of Informatics \emph{and} Preferred Infrastructure, Inc.\\
  \texttt{yyoshida@nii.ac.jp}
}

\maketitle
\begin{abstract}
  We consider the problem of maximizing a monotone submodular function under a knapsack constraint.
  We show that, for any fixed $\epsilon > 0$, there exists a polynomial-time algorithm with an approximation ratio $1-c/e-\epsilon$, where $c \in [0,1]$ is the (total) curvature of the input function.
  This approximation ratio is tight up to $\epsilon$ for any $c \in [0,1]$.
  To the best of our knowledge, this is the first result for a knapsack constraint that incorporates the curvature to obtain an approximation ratio better than $1-1/e$, which is tight for general submodular functions.

  As an application of our result,
  we present a polynomial-time algorithm for the budget allocation problem with an improved approximation ratio.
\end{abstract}

\thispagestyle{empty}
\setcounter{page}{0}
\newpage


\section{Introduction}

In this paper, we consider the problem of maximizing a monotone submodular function under a knapsack constraint.
Specifically, given a monotone submodular function $f:2^E \to \bbR_+$ and a weight function $w:E \to [0,1]$, we aim to solve the following optimization problem:
\begin{align*}
  \text{maximize }f(S) \quad \text{subject to } w(S) \leq 1 \text{ and } S \subseteq E,
\end{align*}
where $w(S) = \sum_{e \in S}w(e)$.
This problem has wide applications in machine learning tasks such as sensor placement~\cite{Krause:2008vo}, document summarization~\cite{Lin:2010wpa,Lin:2011wt}, maximum entropy sampling~\cite{Lee:2006cm}, and budget allocation~\cite{Soma:2014tp}.
Although this problem is NP-hard in general, it is known that we can achieve $(1-1/e)$-approximation in polynomial time~\cite{Sviridenko:2004hq}, and this approximation ratio is indeed tight~\cite{Feige:1998gx}.

Although it is useful to know that we can always obtain $(1-1/e)$-approximation in polynomial time, it is observed that a simple greedy method outputs even better solutions in real applications (see, e.g.,~\cite{Krause:2008vo}), and it is more desirable if we can guarantee a better approximation ratio by making assumptions on the input function.
One such assumption is the notion of curvature, introduced by Conforti and Cornu{\'e}jols~\cite{Conforti:1984ig}.
For a monotone submodular function $f:2^E \to \bbR_+$, the \emph{(total) curvature} of $f$ is defined as
\[
  c_f = 1 - \min_{e \in E}\frac{f_{E-e}(e)}{f(e)}.
\]
Intuitively speaking, the curvature measures how close $f$ is to a linear function.
To see this, note that $c_f \in [0,1]$ and $c_f = 0$ if and only if $f$ is a linear function.

It was shown in~\cite{Conforti:1984ig} that, for maximizing a monotone submodular function under a cardinality constraint, the greedy algorithm achieves an approximation ratio $(1-e^{-c_f})/c_f$, and the result was extended to a matroid constraint~\cite{Vondrak:2010zz}.
Recently, Sviridenko~\emph{et~al.}~\cite{Sviridenko:2015ur} obtained a polynomial-time algorithm for a matroid constraint with an approximation ratio $1-c_f/e$, and showed that this approximation ratio is indeed tight for every $c_f \in [0,1]$ even under a cardinality constraint (note that $1-c_f/e$ is strictly larger than $(1-e^{-c_f})/c_f$ except when $c_f = 0$ or $c_f = 1$).

In this paper, we extend these results to a knapsack constraint and present a polynomial-time algorithm under a knapsack constraint with an approximation ratio $1-c_f/e$.
More specifically, we show the following:
\begin{theorem}\label{the:intro}
  There is an algorithm that, given a monotone submodular function $f: 2^E \to \bbR_+$, a weight function $w: E \to [0,1]$, and $\epsilon \in (0,1)$, outputs a (random) set $S \subseteq E$ with $w(S) \leq 1$ (with probability one) satisfying
  \[
    \E[f(S)] \geq \Bigl(1-\frac{c_f}{e} - \epsilon\Bigr)f(O).
  \]
  Here, $O\subseteq E$ is an optimal solution to the problem, i.e., $O$ is a set with $w(O) \leq 1$ that maximizes $f$.
  The running time is $\compintro$, where $n = |E|$.
\end{theorem}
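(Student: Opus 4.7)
The plan is to combine three ingredients: partial enumeration to isolate the ``heavy'' items of the optimum, a non-oblivious continuous greedy on the knapsack polytope that exploits curvature, and randomized rounding with alteration. This follows the general architecture used for the $(1-1/e)$-approximation on the knapsack constraint, with the curvature-aware middle step imported from the matroid setting of Sviridenko--Svensson--\Vondrak.

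First, I would perform \emph{partial enumeration}. For each subset $A\subseteq E$ of size at most $k=\poly(1/\epsilon)$, regard $A$ as a guess for the elements of the optimum $O$ with either large value or large weight. Let $E_A\subseteq E\setminus A$ contain only those elements whose $f$-marginal over $A$ and whose weight lie below thresholds calibrated so that, for the correct guess $A^*\subseteq O$, every remaining element has weight at most $\epsilon^3$ and marginal at most $\epsilon f(O)/k$. It then suffices, for the correct guess, to approximately maximise the residual function $h(S):=f(A^*\cup S)-f(A^*)$ over $\{S\subseteq E_{A^*}:w(S)\le 1-w(A^*)\}$.

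Second, I would apply the \emph{curvature decomposition} of Sviridenko, Svensson, and \Vondrak~\cite{Sviridenko:2015ur}: writing $f=g+\ell$ with $\ell(e):=f_{E-e}(e)$ linear, $g:=f-\ell$ monotone submodular, and $\ell(S)\ge(1-c_f)f(S)$ for every $S$. This decomposition transfers to the residual as $h=g'+\ell'$ with $g'(S)=g(A^*\cup S)-g(A^*)$ and $\ell'(S)=\ell(S)$. I would then run the non-oblivious continuous greedy of~\cite{Sviridenko:2015ur} on $H=G'+L'$ over the knapsack polytope $P_{A^*}=\{x\in[0,1]^{E_{A^*}}:w^\top x\le 1-w(A^*)\}$. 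The gradient-flow analysis---noting that choosing the direction maximising $\langle\nabla G'(x),y\rangle+L'(y)$ already gains the full linear term $\ell'(O\setminus A^*)$ at every time step---yields a fractional $x^*\in P_{A^*}$ with
\begin{equation*}
  f(A^*)+H(x^*)\;\ge\;f(A^*)+(1-1/e)\,g'(O\setminus A^*)+\ell'(O\setminus A^*)\;\ge\;\Bigl(1-\frac{c_f}{e}\Bigr)f(O).
\end{equation*}
Third, I would \emph{round} $x^*$ by scaling by $1-\epsilon$ and performing independent Bernoulli sampling; if the sampled set $S$ satisfies $w(S)>1-w(A^*)$, output $\emptyset$, otherwise output $A^*\cup S$. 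Because partial enumeration has ensured every element of $E_{A^*}$ has weight at most $\epsilon^3$, a Chernoff bound gives $\Pr[w(S)>1-w(A^*)]\le\epsilon/n$, and the contribution of that bad event to the expectation is negligible; combining with the $(1-\epsilon)$ scaling, the ratio degrades by at most an additive $\epsilon$, yielding the claimed $(1-c_f/e-\epsilon)$ bound.

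The main obstacle, I expect, is aligning the non-oblivious continuous greedy with the knapsack rounding so that the linear part $\ell'$---which has no concavity slack to absorb rounding losses---is essentially preserved in expectation, despite the unavoidable violation probability of independent rounding on a knapsack polytope (in contrast to the lossless pipage rounding available on a matroid polytope). Tuning the enumeration thresholds so that every remaining item has weight $O(\epsilon^3)$ is what makes the concentration tight, and is precisely what forces the $(1/\epsilon)^{\poly(1/\epsilon)}$ dependence in the final running time.
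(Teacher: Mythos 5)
Your architecture (partial enumeration, curvature decomposition, continuous greedy, randomized rounding with alteration) differs from the paper's in ways that create real gaps, and it also misses the structural idea the paper's running-time bound depends on.

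\textbf{The direction-finding step is not justified.} You claim that choosing the direction $y$ maximizing $\langle\nabla G'(x),y\rangle+L'(y)$ ``already gains the full linear term $\ell'(O\setminus A^*)$ at every time step.'' This does not follow. The joint maximization only guarantees the \emph{sum} $\langle\nabla G'(x),y\rangle+L'(y)\ge \bigl(g'(O\setminus A^*)-G'(x)\bigr)+\ell'(O\setminus A^*)$; it provides no separate lower bound on $L'(y)$. Tracing the resulting ODE gives the weaker bound $G'(x^*)+L'(x^*)\ge(1-1/e)\bigl(g'(O\setminus A^*)+\ell'(O\setminus A^*)\bigr)$ unless one additionally certifies $L'(y_t)\ge\ell'(O\setminus A^*)$ for all $t$. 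The paper (and Sviridenko--Feldman--Ward) resolves this by \emph{constraining} the LP at each step with $L(\bv)\ge\lambda$ for a guessed value $\lambda\approx\ell(O_{\rm S})$, and then maximizing the $G$-gain (see Algorithm~\ref{alg:small-elements}). You need either this constrained-direction device or an explicit argument that the unconstrained joint max happens to pay the linear term per-step; as written, the step is a gap.

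\textbf{The decomposition loses exactly the structure the paper needs.} You take $\ell(e)=f_{E-e}(e)$ and $g=f-\ell$. Then $g_{E-e}(e)=0$, so $c_g=1$. The paper instead takes $\ell(e)=(1-\epsilon/2)f_{E-e}(e)$ precisely to keep $c_g\le 1-\tfrac{\epsilon(1-c_f)}{2}<1$. This is not cosmetic: the bound on the number of large elements (Lemma~\ref{lem:bound-on-large-elements}) and the shrinking of the guessing range for $\{\gamma^t_{\rm S}\}$ both go through $(1-c_g)^{-1}$. With your decomposition these bounds are vacuous.

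\textbf{Partial enumeration over $\binom{E}{\le k}$ gives the wrong running time.} Enumerating all subsets of size at most $k=\poly(1/\epsilon)$ costs $n^{\poly(1/\epsilon)}$, not $\poly(n)\cdot(1/\epsilon)^{\poly(1/\epsilon)}$ as in $\compintro$. You attribute the $(1/\epsilon)^{\poly(1/\epsilon)}$ factor to the enumeration thresholds, but the enumeration itself is the dominant cost. The whole point of the paper's design is to avoid this: using the bounded curvature of $g$ it bounds $|O_{\rm L}|$ by $O(1/((1-c_g)\epsilon^6))$, builds $m=|O_{\rm L}|$ copies of the ground set, and \emph{guesses scalar values} (marginals $\gamma^t_i$, linear values $\lambda_i$, etc.) rather than the identity of the heavy items. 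This moves the exponential blow-up from $n$ to $1/\epsilon$.

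\textbf{The concentration bound fails when the residual budget is small.} After fixing $A^*$, your remaining items have weight $\le\epsilon^3$, and you round $(1-\epsilon)x^*$ independently. The Chernoff argument gives $\Pr[w(S)>1-w(A^*)]\le\exp\bigl(-\Omega((1-w(A^*))/\epsilon)\bigr)$, which is not $\le\epsilon/n$ when $1-w(A^*)=O(\epsilon)$. The paper calibrates the small-element threshold \emph{adaptively} to the scale of the fractional solution: $w(e)<\epsilon^3\max_t W(\bv^t)$ in Algorithm~\ref{alg:rounding}, so the Chernoff bound in Lemma~\ref{lem:rounding-weight} is measured relative to $w(O_{\rm S})$, not to an absolute constant, and the tail $\exp(-\Omega(\gamma/\epsilon^2))$ holds for all $\gamma\ge1$ uniformly. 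You would need an analogous scale-adaptive threshold, or a separate argument for the regime where the residual budget is tiny, for the alteration step to go through.
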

Note that the approximation ratio $1-c_f/e$ is indeed tight for every $c_f \in [0,1]$ because the lower bound given by~\cite{Sviridenko:2015ur} holds even for a cardinality constraint.
To the best of our knowledge, this is the first result for a knapsack constraint that incorporates the curvature to obtain an approximation ratio better than $1-1/e$, which is tight for general submodular functions.


We can apply our algorithm to all the above-mentioned applications to obtain a better solution when the input function has a small curvature.
As a representative example, we consider the budget allocation problem~\cite{Alon:2012em}, which models a marketing process that allocates a given budget among media channels, such as TV, newspapers, and the Web, in order to maximize the impact on customers.
We model the process using a bipartite graph on a vertex set $A \cup B$, where $A$ and $B$ correspond to the media channels and the customers, respectively, and an edge $(a,b) \in A \times B$ represents the potential influence of media channel $a$ on customer $b$.
In a simplified setting where we can use each channel at most once, each media channel $a \in A$ can activate a customer with a predetermined probability $p_a \in [0,1]$.
Then, we have to find a set $S \subseteq A$ that maximizes the expected number of activated customers subject to $w(S) := \sum_{a \in S}w(a) \leq 1$, where $w(a)$ is the cost of using media channel $a$.
We can formulate this problem as the maximization of a monotone submodular function $f:2^A \to \bbR_+$ under the knapsack constraint $w(S) \leq 1$.
We can show that $c_f \leq 1 - \min_{b \in B}p^{|\Gamma(b)|-1}$, where $\Gamma(b)$ denotes the set of neighbors of $b$ in the bipartite graph.
By Theorem~\ref{the:intro}, this immediately gives the approximation ratio $1-c_f/e-\epsilon$ for this problem.
The actual model is more general and discussed in detail in Section~\ref{sec:applications}.

\subsection{Proof technique}

Now, we present the outline of our proof.
Let $f:2^E \to \bbR_+$ be the input function and $O \subseteq E$ be the optimal solution, i.e., $O$ is the set that maximizes $f$ among the sets with weight at most one.
We assume that $c_f = 1-\Omega(\epsilon)$; otherwise, we can use a standard algorithm~\cite{Sviridenko:2004hq} to achieve the desired approximation ratio.
Using the argument in~\cite{Sviridenko:2015ur}, we can decompose the input function $f$ into a monotone submodular function $g:2^E \to \bbR_+$ and a linear function $\ell:2^E \to \bbR_+$ such that, if we can compute a set $S \subseteq E$ with $w(S) \leq 1$ and $f(S) = g(S) + \ell(S) \geq (1-1/e)g(O)+\ell(O)$, then $S$ is a $(1-c_f/e)$-approximate solution.
Moreover, by slightly changing the argument in~\cite{Sviridenko:2015ur}, we can also assume that $c_g=1-\Omega(\epsilon(1-c_f)) = 1-\Omega(\epsilon^2)$.

In order to find the desired set $S \subseteq E$, we use a variant of the continuous greedy algorithm~\cite{Calinescu:2011ju} that simultaneously optimizes $g$ and $\ell$.
In this algorithm, we consider continuous versions of $g$, $\ell$, and $w$, denoted by $G:[0,1]^E \to \bbR_+$, $L:[0,1]^E \to \bbR_+$, and $W:[0,1]^E \to \bbR_+$, respectively.
We note that the function $G$ is called the multilinear extension of $g$, and that $L$ and $W$ are linear functions.
We start with the zero vector $\bx \in [0,1]^E$ and then iteratively update it.
The algorithm consists of $1/\epsilon$ iterations, and roughly speaking, in each iteration, we find a vector $\bv \in [0,1]^E$ with the following properties: (i) $G(\bx + \epsilon \bv) - G(\bx) \geq \epsilon (G(\bx \vee \bone_{O}) - G(\bx))$, (ii) $L(\epsilon \bv) = \epsilon L(\bv) \geq \epsilon \ell(O)$, and (iii) $W(\epsilon \bv) = \epsilon W(\bv)\leq w(O)$. Then, we update $\bx$ by adding $\epsilon \bv$.
Here, $\bone_O$ is the characteristic vector of the set $O$ and $\vee$ is the coordinate-wise maximum.
Intuitively speaking, these conditions mean that moving along the direction $\bv$ from $\bx$ is no worse than moving towards $\bx \vee \bone_O$.
We can find such a vector $\bv$ by linear programming.
Then, after $1/\epsilon$ iterations, we get a vector $\bx \in [0,1]^E$ such that $G(\bx) \geq (1-1/e)G(\bone_O) = (1-1/e)g(O)$, $L(\bx) \geq \ell(O)$, and $W(\bx) \leq w(O)$.
Finally, we obtain a set $S \subseteq E$ by rounding the vector $\bx$, where each element $e \in E$ is added with probability $\bx(e)$.

Unfortunately, this strategy does not work as is.
Here, a crucial issue is that we cannot show the concentration of the weight in the rounding step.
To address this issue, by borrowing an idea from~\cite{Badanidiyuru:2013jc}, we split the elements into large and small ones, where an element is said to be \emph{small} if $g(e) \leq \epsilon^6 g(O)$ and $\ell(e) \leq \epsilon^6 \ell(O)$, and is said to be \emph{large} otherwise (in our analysis, it is more convenient to define large and small elements in terms of $g$ and $\ell$ instead of $w$).
Then, since the curvature of $g$ is bounded away from one, we can bound the number of large elements in $O$ by a function of $\epsilon$.\footnote{Although it is claimed in~\cite{Badanidiyuru:2013jc} that the number of large elements is bounded for any submodular function, it is not true in general.}
Let $O_{\rm L},O_{\rm S} \subseteq O$ be the set of large and small elements in $O$, respectively.
Further, we let $O_{\rm L} = \set{o_1,\ldots,o_m}$.
Then, in each iteration, we do the following:
For each $i \in \set{1,\ldots,m}$, we find an element $e_i$ such that (i) $G(\bx \vee \bone_{e_i}) - G(\bx) \geq G(\bx \vee \bone_{o_i}) - G(\bx)$, (ii) $\ell(e_i) \geq \ell(o_i)$, and (iii) $w(e_i) \leq w(o_i)$. Then, we update $\bx$ by adding $\epsilon\bone_{e_i}$.
Here, $\bone_e$ is a characteristic vector of the element $e\in E$.
Intuitively speaking, adding $e_i$ to the current solution is no worse than  adding $o_i$.
For small items, we find a vector $\bv$ as before by considering the characteristic vector $\bone_{O_{\rm S}}$; then, we update $\bx$ by adding $\epsilon \bv$.
In the rounding step, we handle large and small elements separately.
Note that, for each $i \in \set{1,\ldots,m}$, we have computed $1/\epsilon$ elements (through $1/\epsilon$ iterations).
Then, we chose one of them uniformly at random and add it to the output set.
An advantage of this rounding procedure is that we can guarantee that the chosen element for $i \in \set{1,\ldots,m}$ has weight at most $w(o_i)$.
For small elements, we apply the previous rounding procedure with a minor tweak to guarantee that the output set has weight at most one.

In order to realize this idea, we need to address several additional issues.
First, as we do not know the set $O$, we do not know values related to $O$, such as $g(O)$, $\ell(O)$, $G(\bx \vee \bone_O)$, and $G(\bx \vee \bone_{o_i})$.
Hence, we cannot determine whether an element is small or large, and we cannot find the desired vector or element in each iteration.
We address this issue by guessing these values.
For example, we can show a lower bound and an upper bound on $g(O)$ that are $O(n)$ times apart.
This means that we can find a $(1-\epsilon)$-approximation to $g(O)$ in the geometric sequence of length $O(\log_{1+\epsilon} n)=O((\log n)/\epsilon)$ between the lower and upper bounds.
If we naively guess all the values, as we have $1/\epsilon$ iterations, the resulting time complexity will be $\poly(n) \cdot \bigl((\log n)/\epsilon\bigr)^{\poly(1/\epsilon)}$.
However, since the function $g$ has curvature $1-\Omega(\epsilon^2)$, we can reduce the number of candidate values and thus improve the time complexity to $\poly(n)\cdot (1/\epsilon)^{\poly(1/\epsilon)}$.



\subsection{Related work}

As mentioned earlier, it has been shown that the greedy method achieves $(1-e^{-c_f})/c_f$ approximation for a cardinality constraint~\cite{Conforti:1984ig}.
The result was extended to a matroid constraint by \Vondrak~\cite{Vondrak:2010zz}.
He showed that the result actually holds if we replace $c_f$ with the \emph{curvature to the optimum} $c_f^*$, and the approximation ratio $(1-e^{-c_f^*})/c_f^*$ is tight.
Sviridenko~\emph{et~al.}~\cite{Sviridenko:2015ur} improved the approximation ratio to $1-c_f/e-\epsilon$ for a matroid constraint (and hence, a cardinality constraint), which is unattainable with $c_f^*$, and showed that the approximation ratio $1-c_f/e$ is tight even for a cardinality constraint.

Curvature has been used to explain the empirical performance of the greedy method.
Sharma~\emph{et~al.}~\cite{Sharma:2015ur} considered maximum entropy sampling on Gaussian radial basis functions (RBF) kernels, which can be modeled as the maximization of a monotone submodular function, and showed that the curvature of this problem is close to zero.

The maximization of a submodular function under a knapsack constraint has been studied extensively.
Sviridenko obtained a $(1-1/e)$-approximation algorithm with time complexity $O(n^5)$~\cite{Sviridenko:2004hq}.
We can also obtain $(1-1/e-\epsilon)$-approximation with a constant number of knapsack constraints; however, the time complexity blows up to $n^{\poly(1/\epsilon)}$~\cite{Kulik:2013ix}.
It has been claimed in~\cite{Badanidiyuru:2013jc} that, for any fixed $\epsilon > 0$, there is a $(1-1/e-\epsilon)$-approximation algorithm with time complexity $\tilde{O}(n^2)$.
However, as mentioned in the footnote, their argument has a drawback.
Several approximation guarantees have been achieved in~\cite{Iyer:2013tl} using various parameters of the input function.
However, none of them has an approximation ratio better than $1-1/e$ based solely on the assumption that the curvature is bounded.

\subsection{Organization}
The remainder of this paper is organized as follows.
Section~\ref{sec:pre} introduces the definitions used throughout the paper and reviews the basic properties of submodular functions.
Section~\ref{sec:reduction} explains the reduction to a joint approximation of a monotone submodular function and a monotone linear function.
Section~\ref{sec:algorithm} presents a joint approximation algorithm.
Section~\ref{sec:applications} describes an application to the budget allocation problem.



\section{Preliminaries}\label{sec:pre}
For an integer $n \in \bbN$, let $[n]$ denote the set $\set{1,\ldots,n}$.
In this paper, the symbol $E$ always denotes a (finite) domain of a function.

For a function $w:E \to \bbR$ and a subset $S \subseteq E$, we define $w(S) = \sum_{e \in S}w(e)$.
Similarly, for a vector $\bx \in \bbR^E$ and a set $S \subseteq E$, we define $\bx(S) = \sum_{e \in S}\bx(e)$.
For an element $e \in E$, we define $\bone_e$ as the unit vector whose $e$-th element is $1$.
For a set $S \subseteq E$, we define $\bone_S$ as $\sum_{e \in S}\bone_e$.

Let $f:2^E \to \bbR$ be a function.
For an element $e \in E$, we simply write $f(e)$ to denote $f(\set{e})$.
For a set $S \subseteq E$, we define a function $f_S:2^{E} \to \bbR$ as $f_S(T) = f(S \cup T) - f(S)$.
We say that $f$ is \emph{submodular} if, for any $S,T \subseteq E$,
\[
  f(S) + f(T) \geq f(S \cap T) + f(S \cup T).
\]
An equivalent condition is the \emph{diminishing return property}, which requires $f_S(e) \geq f_T(e)$ for  any $S \subseteq T \subsetneq E$ and $e \in E\setminus T$.
We say that $f$ is \emph{linear} if $f(S) = \sum_{e \in S}f(e)$ holds for every $S \subseteq E$.
Note that, if $f$ is submodular (resp., linear), then $f_S$ is also submodular (resp., linear).

For a vector $\bx \in [0,1]^E$, let $R(\bx)$ denote a random set, where each element $e\in E$ is included in the set with probability $\bx(e)$.
For a submodular function $f:2^E \to \bbR$, the \emph{multilinear extension} $F:[0,1]^E \to \bbR$ of $f$ is defined as
\[
  F(\bx) := \E[f(R(\bx))] = \sum_{S \subseteq E}f(S)\prod_{e \in S} \bx(e) \prod_{e \in E \setminus S}(1-\bx(e)).
\]
For an element $e\in E$ and a vector $\bx \in [0,1]^E$, we define $\partial_e F(\bx)$ as the slope of $F$ at $\bx$ in the direction of $\bone_e$.
The following fact is well known (see, e.g.,~\cite{Feldman:2013Tq}):
\begin{align}
  \partial_e F(\bx) = \frac{F(\bx \vee \bone_e) - F(\bx)}{1-x_e}= \frac{\E[f_{R(\bx)}(e)]}{1-x_e}.\label{eq:partial-derivative}
\end{align}



The following lemma bounds the marginal gain of $F$ when adding $\epsilon \by$ to $\bx$:
\begin{lemma}\label{lem:discretization}
  Let $f:2^E \to \bbR_+$ be a monotone submodular function and $\bx,\by \in [0,1]^E$ be vectors such that $\bx + \epsilon \by \in [0,1]^E$.
  Then,
  \[
    F(\bx + \epsilon \by) - F(\bx) \geq \epsilon \sum_{e\in E}\by(e)\E[f_{R(\bx+\epsilon \by)}(e)].
  \]
\end{lemma}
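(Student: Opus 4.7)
The plan is to telescope $F(\bx + \epsilon \by) - F(\bx)$ along the coordinate axes, use multilinearity to turn each coordinate step into an exact partial-derivative contribution, and then invoke submodularity to replace each partial derivative at an intermediate point by one at the common endpoint $\bx + \epsilon \by$.

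Concretely, I would fix an arbitrary ordering $E = \{e_1, \dots, e_n\}$ and put $\bx^{(0)} = \bx$ and $\bx^{(i)} = \bx^{(i-1)} + \epsilon \by(e_i) \bone_{e_i}$, so $\bx^{(n)} = \bx + \epsilon \by$. The hypothesis $\bx + \epsilon \by \in [0,1]^E$ ensures each $\bx^{(i)} \in [0,1]^E$, so everything below is well-defined. Because $F$ is linear in the $e_i$-th coordinate and $\partial_{e_i} F$ is independent of that coordinate, $F(\bx^{(i)}) - F(\bx^{(i-1)}) = \epsilon \by(e_i)\,\partial_{e_i} F(\bx^{(i-1)})$ exactly, so telescoping yields
\[
F(\bx + \epsilon \by) - F(\bx) \;=\; \epsilon \sum_{i=1}^n \by(e_i)\,\partial_{e_i} F(\bx^{(i-1)}).
\]

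Next, I would invoke submodularity in the form: $\partial_e F(\bz)$ is coordinate-wise non-increasing in $\bz$. This is standard and follows by differentiating the defining sum of $F$ and applying the diminishing-returns property of $f$. Since $\partial_{e_i} F$ does not depend on the $e_i$-coordinate of its argument, and $\bx^{(i-1)} \leq \bx + \epsilon \by$ in all other coordinates, this gives $\partial_{e_i} F(\bx^{(i-1)}) \geq \partial_{e_i} F(\bx + \epsilon \by)$.

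Finally, formula~\eqref{eq:partial-derivative} at $\bz = \bx + \epsilon \by$ reads $\partial_{e_i} F(\bz) = \E[f_{R(\bz)}(e_i)]/(1 - z_{e_i}) \geq \E[f_{R(\bz)}(e_i)]$, using $z_{e_i} \in [0,1]$ and $f_{R(\bz)}(e_i) \geq 0$ by monotonicity. Substituting back and rewriting the sum over $i$ as a sum over $e \in E$ gives exactly the claimed inequality. I do not foresee a real obstacle here; the only point that requires a brief justification is the monotonicity of $\partial_e F$, which is a direct consequence of the diminishing-returns property of $f$ applied termwise inside the multilinear expansion.
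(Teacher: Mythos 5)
Your proposal is correct and follows essentially the same route as the paper's proof: telescope over coordinates, use multilinearity to get exact partial-derivative increments, and use submodularity plus the identity~\eqref{eq:partial-derivative} (dropping the factor $1/(1-x_e)\geq 1$) to lower-bound each term by the expected marginal at the endpoint $\bx+\epsilon\by$. The only cosmetic difference is that you apply the monotonicity coming from submodularity to $\partial_e F$ before evaluating~\eqref{eq:partial-derivative} at the endpoint, whereas the paper first passes to $\E[f_{R(\bx^{i-1})}(e)]$ and then uses submodularity; these are the same argument.
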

\begin{proof}
  Let $e_1,\ldots,e_n$ be an arbitrary ordering of elements in $E$.
  For $i \in \set{0,1,\ldots,n}$, let $\bx^i = \bx + \sum_{j=1}^i \epsilon \by(e) \bone_e$.
  Note that $\bx^0 = \bx$ and $\bx^n = \bx+\epsilon \by$.
  Then, we have
  \begin{align*}
    F(\bx + \epsilon \by) - F(\bx)
    & = \sum_{i \in [n]} \partial_{e_i} F(\bx^{i-1})  \epsilon \by(e_i) \tag{by multilinearity of $F$}\\
    & \geq \sum_{i \in [n]} \E[f_{R(\bx^{i-1})}(e)]  \epsilon \by(e_i) \tag{by~\eqref{eq:partial-derivative}}\\
    & \geq \epsilon\sum_{i \in [n]}\by(e_i) \E[f_{R(\bx^n)}(e)].    \tag{by submodularity of $f$}\\
    & = \epsilon\sum_{i \in [n]}\by(e_i) \E[f_{R(\bx+\epsilon \by)}(e)].
    \qedhere
  \end{align*}
\end{proof}


We frequently use the following form of Chernoff's bound.
\begin{lemma}[Relative+Additive Chernoff's bound~\cite{Badanidiyuru:2013jc}]\label{lem:chernoff}
  Let $X_1 , \ldots , X_n$ be independent random variables such that $X_i \in [0,1]$ for every $i \in [n]$.
  Let $X = \frac{1}{n} \sum_{i \in [n]} X_i$ and $\mu = \E[X]$.
  Then, for any $\alpha \in (0,1)$ and $\beta > 0$, we have
  \begin{align*}
    \Pr[|X - \mu| > \alpha \mu+\beta] \leq 2\exp\Bigl(-\frac{n\alpha \beta}{3}\Bigr).
  \end{align*}
\end{lemma}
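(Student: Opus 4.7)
The plan is to derive this from the standard multiplicative Chernoff bound. Writing $S = \sum_{i \in [n]} X_i$, we have $\E[S] = n\mu$, and the event $|X-\mu| > \alpha\mu + \beta$ is identical to $|S - n\mu| > \delta n\mu$ once we set $\delta := \alpha + \beta/\mu$, so that $\delta\mu = \alpha\mu + \beta$.

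For the upper tail, I would invoke the standard form $\Pr[S \geq (1+\delta)n\mu] \leq \exp\bigl(-\delta^2 n\mu/(2+\delta)\bigr)$, which holds for every $\delta \geq 0$ since each $X_i \in [0,1]$. The key calculation is to show $\delta^2\mu/(2+\delta) \geq \alpha\beta/3$. I would split into the ranges $\delta \leq 1$ and $\delta > 1$: in the former, $2+\delta \leq 3$, so it suffices to lower-bound $\delta^2\mu$, and expanding $(\alpha + \beta/\mu)^2\mu = \alpha^2\mu + 2\alpha\beta + \beta^2/\mu \geq 2\alpha\beta$ already gives more than the required $\alpha\beta$; in the latter range $\delta/(2+\delta) \geq 1/3$, so it suffices to show $\delta\mu \geq \alpha\beta$, which holds because $\delta\mu = \alpha\mu + \beta \geq \beta \geq \alpha\beta$ (using $\alpha < 1$).

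For the lower tail, observe that $S \geq 0$ forces the event $\{S \leq (1-\delta)n\mu\}$ to be empty whenever $\delta \geq 1$; when $\delta < 1$, the standard bound $\Pr[S \leq (1-\delta)n\mu] \leq \exp(-\delta^2 n\mu/2)$ combined with the same expansion $\delta^2\mu \geq 2\alpha\beta$ is more than enough. A union bound over the two tails then gives the claimed $2\exp(-n\alpha\beta/3)$. The only mildly delicate point is the case split on $\delta$ in the upper-tail analysis; once that is handled, everything reduces to elementary algebra, so I do not anticipate any real obstacle.
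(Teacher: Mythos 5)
Your proof is correct, but note that the paper itself offers no proof of this lemma at all: it is imported verbatim from the cited reference \cite{Badanidiyuru:2013jc}, so there is nothing internal to compare against. Your derivation from the standard multiplicative Chernoff bounds is a legitimate self-contained argument: setting $\delta = \alpha + \beta/\mu$ correctly identifies the event, the bound $\Pr[S \geq (1+\delta)n\mu] \leq \exp\bigl(-\delta^2 n\mu/(2+\delta)\bigr)$ is valid for all $\delta \geq 0$ for sums of independent $[0,1]$-valued variables, and your two-case estimate (for $\delta \leq 1$, use $2+\delta \leq 3$ together with $\delta^2\mu \geq 2\alpha\beta$; for $\delta > 1$, use $\delta/(2+\delta) \geq 1/3$ together with $\delta\mu \geq \beta \geq \alpha\beta$) does yield the exponent $n\alpha\beta/3$ on each tail, and the lower-tail case with $\exp(-\delta^2 n\mu/2)$ is even easier. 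Two cosmetic points you should tidy up: the case $\mu = 0$ makes $\delta$ undefined, but then all $X_i = 0$ almost surely and the event has probability zero, so the lemma is trivial; and in the lower tail at $\delta = 1$ the set $\{S \leq (1-\delta)n\mu\} = \{S = 0\}$ need not be empty, but the event actually required, $\mu - X > \alpha\mu + \beta$, i.e.\ $S < (1-\delta)n\mu$, is empty, so your conclusion stands once the strict inequality is used. What your route buys is a proof with the standard textbook bounds only; the cited reference's statement is simply a packaged ``relative plus additive'' form of the same estimate.
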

This immediately gives the following sampling algorithm:
\begin{corollary}\label{cor:chernoff}
  Suppose that we can obtain independent samples of a random variable $X$ bounded in $[0,d]$.
  Let $\mu = \E[X]$.
  Then, there exists an algorithm, denoted by $\textsc{Estimate}_{\alpha,\beta,\delta}(X)$, that, given $\alpha,\beta,\delta\in(0,1)$, outputs a value $\hat{\mu}$ such that
  $|\hat{\mu} - \mu| \leq \alpha \mu+\beta d$
  with probability at least $1-\delta$.
  The number of samples used by the algorithm is $O(\log(1/\delta)/(\alpha\beta))$.
\end{corollary}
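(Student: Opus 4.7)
The plan is to derive this as a direct consequence of Lemma~\ref{lem:chernoff} via rescaling. The algorithm is the obvious one: draw $n$ independent samples $X_1,\ldots,X_n$ of $X$ and return the empirical mean $\hat{\mu} = \frac{1}{n}\sum_{i\in[n]}X_i$. The task is just to pick $n$ so that the desired error guarantee holds with probability at least $1-\delta$.

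Since Lemma~\ref{lem:chernoff} is stated for random variables in $[0,1]$ whereas our $X_i$ lie in $[0,d]$, I would first rescale: set $Y_i = X_i/d \in [0,1]$, so that $Y = \frac{1}{n}\sum_i Y_i = \hat{\mu}/d$ has mean $\mu/d$. Applying Lemma~\ref{lem:chernoff} to the $Y_i$'s with relative parameter $\alpha$ and additive parameter $\beta$ (noting that these are precisely the roles they play in the conclusion, after scaling back up by $d$) yields
\[
  \Pr\bigl[|\hat{\mu}/d - \mu/d| > \alpha\cdot \mu/d + \beta\bigr] \leq 2\exp\bigl(-n\alpha\beta/3\bigr).
\]
Multiplying the event's inequality by $d$ gives exactly $|\hat{\mu} - \mu| > \alpha\mu + \beta d$, so the left-hand side equals the probability we want to control.

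To make the right-hand side at most $\delta$, it suffices to choose $n \geq 3\ln(2/\delta)/(\alpha\beta) = O(\log(1/\delta)/(\alpha\beta))$, which matches the claimed sample complexity. There is no real obstacle here; the only things to verify are that the rescaling correctly transforms the two error terms (relative error is scale-invariant, additive error picks up a factor of $d$) and that the constants in the final sample bound absorb the factor of $3$ and the $\ln 2$. Hence the corollary follows immediately from the lemma.
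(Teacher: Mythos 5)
Your proposal is correct and is exactly the argument the paper has in mind: the paper gives no separate proof, treating the corollary as an immediate consequence of Lemma~\ref{lem:chernoff}, and your rescaling $Y_i = X_i/d$ together with the choice $n = O(\log(1/\delta)/(\alpha\beta))$ fills in precisely the intended steps. Nothing is missing.
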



\section{Reduction}\label{sec:reduction}

In this section, we prove Theorem~\ref{the:intro} using the following theorem, which gives a joint approximation of a monotone submodular function and a monotone linear function.
\begin{theorem}\label{the:gl}
  There is an algorithm that, given a monotone submodular function $g : 2^E \to \bbR_+$, a monotone linear function $\ell : 2^E \to \bbR_+$, a weight function $w: E \to [0,1]$, and $\epsilon \in (0,1)$, outputs a (random) set $S \subseteq E$ with $w(S) \leq 1$ satisfying
  \[
    \E[g(S) + \ell(S)] \geq \Bigl(1 - \frac{1}{e}\Bigr) g ( O ) + \ell(O) - \epsilon  \bigl(g(O)+\ell(O)\bigr).
  \]
  Here, $O = \arg\max_{T\subseteq E:w(T) \leq 1}\bigl(g(T)+\ell(T)\bigr)$ is an optimal solution.
  The running time is $\compgl$, where $n = |E|$.
\end{theorem}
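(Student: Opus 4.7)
The plan is to run a joint variant of the continuous greedy algorithm of~\cite{Calinescu:2011ju} that simultaneously tracks the multilinear extension $G$ of $g$, the linear function $L := \ell$, and the knapsack weight $W$, producing a fractional vector $\bx \in [0,1]^E$ with $G(\bx) \geq (1-1/e - O(\epsilon)) g(O)$, $L(\bx) \geq (1 - O(\epsilon))\ell(O)$, and $W(\bx) \leq 1 + O(\epsilon)$, and then to round $\bx$ to a set $S$ with $w(S) \leq 1$ without losing too much value. The central obstruction is the rounding step: independent rounding of $\bx$ does not concentrate the knapsack weight unless all heavy elements are isolated. Following~\cite{Badanidiyuru:2013jc}, I would split $O$ into \emph{small} elements (those $e$ with $g(e) \leq \epsilon^6 g(O)$ and $\ell(e) \leq \epsilon^6 \ell(O)$) and \emph{large} elements, and treat the two populations separately. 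The structural point that makes the split work is that the curvature bound on $g$ restricts the number of large elements in $O$ to some $m = \poly(1/\epsilon)/(1-c_g)$, independent of $n$.

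For the fractional step, set $T := 1/\epsilon$, start with $\bx = \bzero$, and iterate $T$ rounds. In each round, handle the large elements individually: for each $o_i \in O_L = \set{o_1, \ldots, o_m}$, search for an element $e_i$ satisfying $G(\bx \vee \bone_{e_i}) - G(\bx) \geq G(\bx \vee \bone_{o_i}) - G(\bx)$, $\ell(e_i) \geq \ell(o_i)$, and $w(e_i) \leq w(o_i)$, and update $\bx \leftarrow \bx + \epsilon \bone_{e_i}$; such an element exists because $e_i = o_i$ is always feasible. For the small elements, solve a linear program to find $\bv \in [0,1]^E$ with $L(\bv) \geq \ell(O_S)$, $W(\bv) \leq w(O_S)$, and a linear lower bound derived from Lemma~\ref{lem:discretization} that implies $G(\bx + \epsilon \bv) - G(\bx) \geq \epsilon (G(\bx \vee \bone_{O_S}) - G(\bx))$, then update $\bx \leftarrow \bx + \epsilon \bv$; the LP is feasible because $\bv = \bone_{O_S}$ works. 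A telescoping analysis over the $T$ rounds, standard for continuous greedy, yields the three displayed guarantees on $\bx$.

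Rounding is done separately for the two populations. For each large index $i$, pick one of the $T$ candidate elements $e_i^{(1)}, \ldots, e_i^{(T)}$ uniformly at random and add it to $S$; in expectation this gives a value contribution matching the large-part fractional guarantee, and deterministically it contributes at most $\max_t w(e_i^{(t)}) \leq w(o_i)$ to the knapsack, so the large part of $S$ satisfies $w(S_L) \leq w(O_L)$. For the small part, independently round the small coordinates of $\bx$; Lemma~\ref{lem:chernoff} applied to both $g$-value and $w$-weight, using that each small element contributes at most $\epsilon^6 g(O)$ to $g$ (and a similar control on its $w$-contribution), yields concentration within an additive $O(\epsilon)$ error. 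A mild scaling of the fractional vector by $1 - O(\epsilon)$ together with a deterministic final prune enforces $w(S) \leq 1$ with probability one, at a further $O(\epsilon)$ loss in value.

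Several quantities used by the algorithm depend on the unknown $O$: $g(O), \ell(O), G(\bx \vee \bone_{O_S})$, and $G(\bx \vee \bone_{o_i})$ for each $i$; values of $G$ and its partial derivatives are themselves estimated by sampling via Corollary~\ref{cor:chernoff}. I would guess each such quantity on a geometric $(1+\epsilon)$-grid. The starting values $g(O), \ell(O)$ live in ranges of ratio $O(n)$, covered by a grid of size $O((\log n)/\epsilon)$; naive enumeration over all rounds for the iterate-dependent quantities would blow up to $(\log n/\epsilon)^{\poly(1/\epsilon)}$, but the refinement using $c_g < 1$ shrinks the range of $G(\bx \vee \bone_{o_i})$ across iterations to ratio $O(1/(1-c_g))$, cutting each per-round grid to $O((1/\epsilon)\log(1/(1-c_g)))$ and yielding the stated running time $\compgl$. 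The main conceptual obstacle is coordinating the joint fractional step with the separate large/small rounding so that the guarantees on $g$, on $\ell$, and on the knapsack all hold simultaneously within the error budget; the bounds $m = \poly(1/\epsilon)/(1-c_g)$ on the large population and $\epsilon^6$ on smallness are the two parameters that make this coordination feasible.
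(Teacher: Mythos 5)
Your plan coincides with the paper's proof in all of its major components: the $g,\ell$-based definition of small and large elements, the curvature-driven bound $m = \poly(1/\epsilon)/(1-c_g)$ on the number of large elements in $O$, the continuous-greedy-with-guessing framework that treats the $m$ large coordinates one at a time (via the $m$ disjoint copies $E_1,\ldots,E_m$ implicitly present in your ``pick one of the $T$ candidates for index $i$'' rounding) and handles small elements by an LP, the uniform-over-$T$ selection for the large part, the independent rounding of the small part, and the curvature trick that shrinks the per-iteration guess grid from $O((\log n)/\epsilon)$ to $O((1/\epsilon)\log(1/(1-c_g)))$. So the route is essentially the paper's.

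There is, however, one step where your argument as written would fail. You claim that each small element having $g$-contribution at most $\epsilon^6 g(O)$ gives ``a similar control on its $w$-contribution'' and then invoke Chernoff for the knapsack weight of the rounded small set. That is not so: smallness is defined solely in terms of $g$ and $\ell$, and a small element may well have $w(e)$ close to $w(O_S)$, so the bounded-increment hypothesis of Lemma~\ref{lem:chernoff} is simply not available for the weight. Your ``deterministic final prune'' does not repair this, because a prune performed after rounding cannot be charged against the fractional objective without further argument. The paper's fix is to act \emph{before} rounding: it zeroes out every small-in-$(g,\ell)$ coordinate $e$ with $w(e) \geq \epsilon^3 \max_t W(\bv^t)$, observing that the total fractional mass on such coordinates is at most $\epsilon^{-3}$ (since $W(\bv^t) \leq w(O_S)$) and hence the resulting loss in $\widehat G$ and $L$ is only $O(\epsilon^3)\cdot(v_g+v_\ell)$ because each of those coordinates is small in both $g$ and $\ell$; after zeroing, every remaining random increment is at most $\epsilon^3 w(O_S)$, which is exactly the bounded-increment condition that lets Lemma~\ref{lem:chernoff} give the exponential tail on $w(S_S)$, and the algorithm then simply outputs $\emptyset$ in the exponentially unlikely overflow event. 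Without this pre-rounding truncation of $w$-heavy small elements, your concentration step has no basis and the ``$w(S)\leq 1$ with probability one'' guarantee cannot be delivered at an $O(\epsilon)$ value loss.
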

The proof of Theorem~\ref{the:gl} is given in Section~\ref{sec:algorithm}.
In the remainder of this section, we prove Theorem~\ref{the:intro} using Theorem~\ref{the:gl}.
The argument is similar to that used in~\cite{Sviridenko:2015ur}, but it is more subtle here because the running time in Theorem~\ref{the:gl} depends on the curvature of $g$.
We use the following lemma.
\begin{lemma}[Lemma 2.1 of~\cite{Sviridenko:2015ur}]\label{lem:2.1-of-Sviridenko}
  If $f : 2^E \to \bbR_+$ is a monotone submodular function, then $\sum_{e \in E} f_{E-e}(e) \geq (1 - c_f)f(S)$ for all $S \subseteq E$.
\end{lemma}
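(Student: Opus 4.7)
The plan is to combine the definition of curvature with ordinary subadditivity. First, I would unpack the definition: $c_f = 1 - \min_{e \in E} f_{E-e}(e)/f(e)$ yields, for every $e \in E$, the pointwise inequality $f_{E-e}(e) \geq (1-c_f)\,f(e)$. (If $f(e) = 0$ for some element, then monotonicity forces $f_{E-e}(e) = 0$ as well, so the offending ratio can harmlessly be set to $1$ in the minimum and no degeneracy arises.)

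Next I would restrict the index of summation to $e \in S$ and sum the pointwise bound to get
\[
  \sum_{e \in S} f_{E-e}(e) \;\geq\; (1-c_f) \sum_{e \in S} f(e).
\]
To turn the right-hand side into $(1-c_f)\,f(S)$, I would invoke the standard subadditivity $\sum_{e \in S} f(e) \geq f(S)$, which, for a monotone submodular $f$ with the usual normalization $f(\emptyset) = 0$, follows from submodularity by peeling off singletons one at a time. Finally, I would push the left-hand sum back up to all of $E$: since $f$ is monotone, every marginal $f_{E-e}(e)$ is nonnegative, so $\sum_{e \in E} f_{E-e}(e) \geq \sum_{e \in S} f_{E-e}(e)$. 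Chaining the three inequalities delivers the claim.

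There is really no hard step here: the lemma is a short consequence of (a) the curvature definition applied element by element and (b) the subadditivity of a normalized monotone submodular function. The only thing to stay honest about is the direction of each of the two restrictions — on the right one \emph{shrinks} the sum over $S$ via subadditivity of $f$ on $S$, while on the left one \emph{enlarges} the sum from $S$ up to $E$ via nonnegativity of marginals on $E \setminus S$ — and it is precisely because these two adjustments pull in opposite directions that both happen to be in our favor, leaving the single factor $(1-c_f)$ as the only loss.
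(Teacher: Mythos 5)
Your proof is correct and is the standard argument: the paper itself does not reprove this lemma but imports it verbatim from Sviridenko et al.\ (their Lemma~2.1), whose proof is exactly your chain $\sum_{e \in E} f_{E-e}(e) \geq \sum_{e \in S} f_{E-e}(e) \geq (1-c_f)\sum_{e \in S} f(e) \geq (1-c_f) f(S)$, using the elementwise curvature bound, subadditivity, and nonnegativity of marginals. One cosmetic remark: in the degenerate case $f(e)=0$, concluding $f_{E-e}(e)=0$ really uses submodularity (monotonicity forces $f(\emptyset)=0$, and then diminishing returns gives $f_{E-e}(e) \leq f_{\emptyset}(e) = 0$), not monotonicity alone, but this does not affect the argument.
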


\begin{theorem}\label{the:f}
  There is an algorithm that, given a monotone submodular function $f: 2^E \to \bbR_+$, a weight function $w: E \to [0,1]$, and $\epsilon \in (0,1)$, outputs a (random) set $S \subseteq E$ with $w(S) \leq 1$ satisfying
  \[
    \E[f(S)] \geq \Bigl(1-\frac{c_f}{e} - \epsilon\Bigr)f(O).
  \]
  Here, $O = \arg\max_{T \subseteq E: w(T) \leq 1}f(T)$ is an optimal solution.
  The running time is $\compf$, where $n = |E|$.
\end{theorem}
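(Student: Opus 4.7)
The plan is to prove Theorem~\ref{the:f} by reducing to Theorem~\ref{the:gl} through a Sviridenko-style decomposition of $f$ into a monotone submodular part $g$ and a monotone linear part $\ell$, with the twist that $\ell$ must be scaled down slightly so that the curvature of $g$ stays bounded away from $1$. I will first dispose of the easy regime: if $1-c_f \leq \epsilon$, then the standard $(1-1/e)$-approximation algorithm of~\cite{Sviridenko:2004hq} returns $S$ with $f(S) \geq (1-1/e)f(O) \geq (1-c_f/e-\epsilon)f(O)$, so from now on we can assume $1-c_f = \Omega(\epsilon)$.

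In the main case, set $\ell(e) := \lambda\, f_{E-e}(e)$ for each $e \in E$ (extended linearly) with $\lambda := 1-\epsilon/3$, and $g := f-\ell$. Routine checks show that $\ell$ is a nonnegative monotone linear function and $g$ is monotone submodular: submodularity is immediate from submodularity of $f$ and linearity of $\ell$, while monotonicity of $g$ uses $f_S(e) \geq f_{E-e}(e) \geq \lambda f_{E-e}(e) = \ell(e)$ for any $S$ and $e \notin S$. The key computation is the curvature of $g$: from $g(e) = f(e) - \lambda f_{E-e}(e)$ and $g_{E-e}(e) = (1-\lambda) f_{E-e}(e)$, the ratio $g_{E-e}(e)/g(e) = (1-\lambda)r_e/(1-\lambda r_e)$ with $r_e := f_{E-e}(e)/f(e)$ is monotone increasing in $r_e$, and $r_e \geq 1-c_f$ by definition of curvature. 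Plugging in $\lambda = 1-\epsilon/3$ gives $1-c_g = \Omega(\epsilon(1-c_f))$, the crucial bound that makes the application of Theorem~\ref{the:gl} efficient.

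I then apply Theorem~\ref{the:gl} to $g$, $\ell$, and $w$ with error parameter $\epsilon/3$ to obtain a random $S$ with $w(S)\leq 1$ and $\E[g(S)+\ell(S)] \geq (1-1/e)g(O)+\ell(O)-(\epsilon/3)f(O)$. To turn this into a bound on $\E[f(S)]$, I need an upper bound on $g(O)$: from $f_{E-e}(e) \geq (1-c_f)f(e)$ and subadditivity (a variant of Lemma~\ref{lem:2.1-of-Sviridenko} applied to $O$), I get $\ell(O) = \lambda \sum_{e \in O} f_{E-e}(e) \geq \lambda(1-c_f)f(O)$, hence $g(O) = f(O)-\ell(O) \leq (c_f + \epsilon/3)f(O)$. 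Rewriting $\E[f(S)] = f(O) - g(O)/e - (\epsilon/3)f(O)$ and substituting then yields $\E[f(S)] \geq (1-c_f/e-\epsilon)f(O)$ after absorbing lower-order terms, and the running-time claim follows by substituting $1-c_g = \Theta(\epsilon(1-c_f))$ into $\compgl$.

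The main obstacle is calibrating $\lambda$. Pushing $\lambda \to 1$ drives $g(O)/e$ towards $c_f f(O)/e$ and gives a near-optimal approximation ratio, but it also pushes $c_g \to 1$ and blows up the running time of Theorem~\ref{the:gl}; taking $\lambda$ too small preserves $c_g$ at the cost of the approximation ratio. Verifying that the single setting $\lambda = 1-\Theta(\epsilon)$ simultaneously yields a $(1-c_f/e-O(\epsilon))$-approximation and keeps $1-c_g = \Omega(\epsilon(1-c_f))$ is the heart of the reduction; everything else is a direct verification of monotonicity, submodularity, and nonnegativity of $g$ and $\ell$.
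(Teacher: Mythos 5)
Your proof is correct and takes essentially the same route as the paper: decompose $f = g + \ell$ with $\ell(e) = (1-\Theta(\epsilon))f_{E-e}(e)$, verify $1-c_g = \Omega(\epsilon(1-c_f))$ and $\ell(O) \geq (1-\Theta(\epsilon))(1-c_f)f(O)$ via Lemma~\ref{lem:2.1-of-Sviridenko}, then invoke Theorem~\ref{the:gl}; the choice of $\lambda = 1-\epsilon/3$ versus the paper's $1-\epsilon/2$ and your upper-bounding of $g(O)$ rather than lower-bounding $\ell(O)/e$ are cosmetic. One small structural remark: the paper places the fall-back to Sviridenko's $O(n^5)$ algorithm in the proof of Theorem~\ref{the:intro}, not Theorem~\ref{the:f}, because $O(n^5)$ is not literally of the form $\compf$ (the latter has only an $n^4\polylog(n)$ dependence on $n$), so including the case split here slightly overstates the running-time claim of Theorem~\ref{the:f} as written.
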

\begin{proof}
  Define the functions $g,\ell:2^E \to \bbR_+$ such that
  \begin{align*}
    \ell(S) = \Bigl(1-\frac{\epsilon}{2}\Bigr)\sum_{e \in S}f_{E-e}(e)
    \quad \text{and} \quad
    g(S)  = f(S) - \ell(S)
  \end{align*}
  for every $S \subseteq E$.

  It is not hard to see that $\ell$ is a nonnegative monotone linear function and that $g$ is a nonnegative monotone submodular function.
  Moreover, the curvature of $g$ is
  \begin{align*}
    c_g
    = 1 - \min_{e \in E}\frac{g_{E-e}(e)}{g(e)}
    = 1 - \min_{e \in E}\frac{f_{E-e}(e) - (1-\epsilon/2)f_{E-e}(e)}{f(e) - (1-\epsilon/2)f_{E-e}(e)}
    \leq 1 - \frac{\epsilon}{2}\min_{e \in E}\frac{f_{E-e}(e)}{f(e)}
    = 1-\frac{\epsilon (1-c_f)}{2}.
  \end{align*}
  Further, Lemma~\ref{lem:2.1-of-Sviridenko} implies that for any set $S \subseteq  E$,
  \[
    \ell(S) = \Bigl(1-\frac{\epsilon}{2}\Bigr)\sum_{e\in S} f_{E-e}(e) \geq \Bigl(1-\frac{\epsilon}{2}\Bigr)(1 - c_f)f(S)
    \geq
    \Bigl(1 - c_f-\frac{\epsilon}{2}\Bigr)f(S).
  \]
  By applying Theorem~\ref{the:gl} to $g$, $\ell$, $w$, and $\epsilon/2$, we can find a (random) set $S \subseteq E$ with $w(S) \leq 1$ satisfying
  \begin{align*}
    \E[f(S)] & = \E[g(S) + \ell(S)]
    \geq  \Bigl(1 - \frac{1}{e}\Bigr)g(O) + \ell(O) - \frac{\epsilon}{2} \bigl( g(O) + \ell(O)\bigr)\\
    & =  \Bigl(1 - \frac{1}{e}\Bigr)  f(O) + \frac{1}{e}\ell(O) - \frac{\epsilon}{2} f(O)\\
    & \geq  \Bigl(1 - \frac{1}{e}\Bigr)  f(O) + \frac{1-c_f-\epsilon/2}{e}f(O) - \frac{\epsilon}{2} f(O)\\
    & \geq  \Bigl(1 - \frac{c_f}{e} - \epsilon\Bigr)  f(O).
  \end{align*}
  The running time is clearly as stated.
\end{proof}

Now, we prove our main theorem.
\begin{proof}[Proof of Theorem~\ref{the:intro}]
  If $c_f < 1-e\epsilon$, then we run the algorithm in Theorem~\ref{the:f}.
  The approximation factor is $1-c_f/e-\epsilon$ and the running time is $\compf=\compfsimple$.

  If $c_f \geq 1-e\epsilon$, then we simply run the $O(n^5)$-time $(1-1/e)$-approximation algorithm presented in~\cite{Sviridenko:2004hq}.
  Then, the approximation factor is
  \[
    1-\frac{1}{e} \geq 1-\frac{1-e\epsilon}{e}-\epsilon  \geq 1-\frac{c_f}{e}-\epsilon.
  \]

  In both cases, the approximation factor is at least $1-c_f/e-\epsilon$ whereas the running time is as desired.
\end{proof}



\section{Proof of Theorem~\ref{the:gl}}\label{sec:algorithm}
In this section, we prove Theorem~\ref{the:gl}.
Throughout this section, $G:[0,1]^E \to \bbR_+$ denotes the multilinear extension of $g$, while $L:[0,1]^E \to \bbR_+$ and $W:[0,1]^E \to \bbR_+$ denote the following linear functions:
\begin{align*}
  L(\bx) = \sum_{e \in E}\bx(e)\ell(e) \quad \text{and} \quad
  W(\bx) = \sum_{e \in E}\bx(e)w(e).
\end{align*}
Furthermore, we define $d_g = \max_{e \in E} g(e)$, $d_\ell = \max_{e \in E} \ell(e)$, and $d_{g,\ell} = \max(d_g,d_\ell)$.
Note that we have $g(O) \leq nd_g \leq nd_{g,\ell}$, $\ell(O) \leq nd_\ell \leq nd_{g,\ell}$, and $d_{g,\ell} \leq g(O)+\ell(O) \leq 2n d_{g,\ell}$.
Recall that $n = |E|$.

In Section~\ref{subsec:small}, we argue that we need to deal with ``small'' and ``large'' elements separately in order to guarantee that we get a set satisfying the knapsack constraint after rounding.
Our algorithm updates a vector $\bx \in [0,1]^E$ in several iterations and then rounds it.
In Section~\ref{subsec:small-continous-greedy}, we present an algorithm that updates $\bx$ by adding a vector supported on small elements.
In Section~\ref{subsec:general-continuous-greedy}, we present the entire algorithm that computes the vector $\bx$ by taking large elements into account.
Then, in Section~\ref{subsec:rounding}, we describe our rounding procedure.
We need to guess several parameters when running the algorithms in Section~\ref{subsec:small-continous-greedy} and~\ref{subsec:general-continuous-greedy}.
Our final algorithm with the guessing process is presented in Section~\ref{subsec:final}.


\subsection{Small elements}\label{subsec:small}

Our algorithm computes a vector $\bx \in [0,1]^E$ with $W(\bx) \leq 1$ and then rounds it to a set.
A natural rounding method is to simply output the random set $R(\bx)$.
Then, we can guarantee that the expected objective values $\E[g(R(\bx))]$ and $\E[ \ell(R(\bx))]$ are sufficiently large and the expected weight $\E[w(R(\bx))]$ is at most one.
However, we cannot guarantee the concentration of $w(R(\bx))$ because some elements have large contributions to the weight.
To resolve this issue, we say that elements in $E$ are \emph{small} if
\[
  g(e) \leq \epsilon^6 g(O) \quad \text{and} \quad \ell(e) \leq \epsilon^6 \ell(O).
\]
Then, we can freely remove some of small elements for decreasing the weight without decreasing the value significantly.
Further, we can prove that the number of large elements is bounded by a polynomial in $\epsilon$ and $c_g$.

An issue here is that we do not know $O$; hence, we cannot determine whether an element is small.
To resolve this issue, we guess the values of $g(O)$ and $\ell(O)$.
Without loss of generality, we can assume that $\epsilon/n = (1-\epsilon)^k$ for some integer $k$; otherwise, we slightly decrease the value of $\epsilon$.
Then, we define a set $V_{\epsilon,n}(g,\ell) = \set{nd_{g,\ell},(1-\epsilon)nd,\ldots,\epsilon d_{g,\ell}, 0}$, and we use the values in $V_{\epsilon}(g,\ell)$ to guess $g(O)$ and $\ell(O)$.
Since $g(O) \leq nd_{g,\ell}$ and $\ell(O) \leq nd_{g,\ell}$ hold, there exist some $v_g, v_\ell \in V_{\epsilon,n}(g,\ell)$ such that
\begin{align}
  (1-\epsilon)v_g - \epsilon d_{g,\ell} \leq g(O) \leq v_g
  \quad \text{and} \quad
  (1-\epsilon)v_\ell - \epsilon d_{g,\ell} \leq \ell(O) \leq v_\ell.
  \label{eq:guess-g(O)-and-l(O)}
\end{align}

We say that an element $e \in E$ is \emph{small with respect to $(v_g,v_\ell)$} if
\[
  g(e) \leq \epsilon^6 v_g
  \quad \text{and} \quad
  \ell(e) \leq \epsilon^6 v_\ell.
\]
Otherwise, we say that an element $e \in E$ is \emph{large with respect to $(v_g,v_\ell)$}.
Let $E_{\rm L}(v_g,v_\ell) \subseteq E$ and $E_{\rm S}(v_g,v_\ell) \subseteq E$ be the sets of large and small elements, respectively, with respect to $(v_g,v_\ell)$.
Further, we define $O_{\rm L}(v_g,v_\ell) = E_{\rm L}(v_g,v_\ell) \cap O$ and $O_{\rm S}(v_g,v_\ell) = E_{\rm S}(v_g,v_\ell) \cap O$.
We omit $v_g$ and $v_\ell$ from these notations if they are clear from the context.

When $v_g$ and $v_\ell$ satisfy~\eqref{eq:guess-g(O)-and-l(O)}, we can upper bound the number of large elements in $O$:
\begin{lemma}\label{lem:bound-on-large-elements}
  If $v_g$ and $v_\ell$ satisfy~\eqref{eq:guess-g(O)-and-l(O)}, then we have $|O_{\rm L}| = O\bigl(\frac{1}{(1-c_g)\epsilon^6}\bigr)$.
\end{lemma}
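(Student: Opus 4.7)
The plan is to split $O_{\rm L}$ into two overlapping parts according to which of the two ``largeness'' conditions is violated: let $O_g = \{e \in O : g(e) > \epsilon^6 v_g\}$ and $O_\ell = \{e \in O : \ell(e) > \epsilon^6 v_\ell\}$, so that $O_{\rm L} = O_g \cup O_\ell$. It then suffices to bound $|O_g|$ and $|O_\ell|$ separately. The degenerate cases $v_g = 0$ or $v_\ell = 0$ are harmless: the left inequality in~\eqref{eq:guess-g(O)-and-l(O)} forces $g(O)=0$ (resp.\ $\ell(O)=0$), and monotonicity then makes the corresponding set empty, so from now on I can assume $v_g, v_\ell > 0$.

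For the linear part the bound is immediate: since $\ell$ is linear and monotone and $O_\ell \subseteq O$,
\[
 v_\ell \;\geq\; \ell(O) \;\geq\; \ell(O_\ell) \;=\; \sum_{e \in O_\ell} \ell(e) \;>\; |O_\ell| \cdot \epsilon^6 v_\ell,
\]
which gives $|O_\ell| < 1/\epsilon^6$. For the submodular part, the key input is the definition of curvature: every $e \in O_g$ satisfies $g_{E-e}(e) \geq (1-c_g)\,g(e) > (1-c_g)\epsilon^6 v_g$. Enumerate $O_g = \{o_1, \ldots, o_k\}$ arbitrarily and let $O_g^{i-1} = \{o_1, \ldots, o_{i-1}\}$. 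Since $O_g^{i-1} \subseteq E - o_i$, the diminishing-return form of submodularity gives $g_{O_g^{i-1}}(o_i) \geq g_{E - o_i}(o_i)$, so a telescoping sum yields
\[
 g(O_g) \;=\; \sum_{i=1}^{k} g_{O_g^{i-1}}(o_i) \;\geq\; \sum_{e \in O_g} g_{E-e}(e) \;>\; |O_g|\,(1-c_g)\epsilon^6 v_g.
\]
Combining this with $g(O_g) \leq g(O) \leq v_g$ gives $|O_g| < \frac{1}{(1-c_g)\epsilon^6}$, and summing the two estimates yields $|O_{\rm L}| = O\bigl(\frac{1}{(1-c_g)\epsilon^6}\bigr)$.

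There is no real obstacle here; the only subtle point is recognizing that the crude linear-function argument for $\ell$ is \emph{not} sufficient for $g$, and that the correct replacement is the lower bound on $g_{E-e}(e)$ coming from curvature, combined with submodularity to lift a per-element bound into a bound on $g(O_g)$. The factor $1/(1-c_g)$ in the statement is exactly what one pays for this lifting, and it is why the reduction in Section~\ref{sec:reduction} was careful to ensure $c_g \leq 1 - \Omega(\epsilon(1-c_f))$ so that this quantity stays polynomially controlled.
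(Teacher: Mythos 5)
Your proof is correct and follows essentially the same route as the paper's: bound the number of $\ell$-large elements in $O$ by linearity together with $\ell(O)\leq v_\ell$, bound the number of $g$-large elements by a telescoping sum combined with the curvature inequality $g_{E-e}(e)\geq (1-c_g)g(e)$ and $g(O)\leq v_g$, and add the two counts. (The only nitpick: in the degenerate case $v_g=0$ it is the \emph{right} inequality $g(O)\leq v_g$ of~\eqref{eq:guess-g(O)-and-l(O)} that forces $g(O)=0$, not the left one.)
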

\begin{proof}
  Let $m_\ell$ be the number of elements $e \in O$ with $\ell(e) > \epsilon^6 v_\ell$.
  Then, we have $\epsilon^6 v_\ell  m_\ell \leq \ell(O)$.
  Since $v_\ell \geq \ell(O)$, we have $m_\ell \leq 1/\epsilon^6$.

  Let $\set{o_1,\ldots,o_{m_g}}$ be the set of elements $e \in O$ with $g(e) > \epsilon^6 v_g$.
  Then, we have
  \[
    g(O)
    \geq
    \sum_{i \in [m_g]}g_{\set{o_1,\ldots,o_{i-1}}}(o_i)
    \geq
    (1-c_g)\sum_{i \in [m_g]}g(o_i)
    \geq
    (1-c_g)\epsilon^6 v_g m_g.
  \]
  Since $v_g \geq g(O)$, we have $m_g \leq \frac{1}{(1-c_g)\epsilon^6}$.

  Then, we have $|O_{\rm L}| \leq m_g + m_\ell = O(\frac{1}{(1-c_g)\epsilon^6})$.
\end{proof}

In addition to the values of $g(O)$ and $\ell(O)$, the value of $|O_{\rm L}|$ is not also not known.
However, we can easily guess it because there are only $O(\frac{1}{(1-c_g)\epsilon^6})$ choices from Lemma~\ref{lem:bound-on-large-elements}.
We use the symbol $m$ to denote the guessed value of $|O_{\rm L}|$.

For each choice of $v_g$, $v_\ell$, and $m$, we compute a (random) set that jointly maximizes $g$ and $\ell$, and the final output is the best one among them.
Since $|V_{\epsilon,n}(g,\ell)| = O(\log_{1/(1-\epsilon)}(n/\epsilon)) = O(\log(n/\epsilon)/\epsilon)$,
this guessing process makes the running time $O\Bigl(\bigl(\log(n/\epsilon)/\epsilon\bigr)^2 \cdot \frac{1}{(1-c_g)\epsilon^6}\Bigr) = O\Bigl(\frac{\log^2(n/\epsilon)}{(1-c_g)\epsilon^8}\Bigr)$ times larger.
The details will be explained in Section~\ref{subsec:final}.



\subsection{Subroutine for handling small elements}\label{subsec:small-continous-greedy}

Here, we explain a subroutine that finds a vector $\bv \in [0,1]^E$ supported on the set $E_S$ of small elements (with respect to the current guesses $v_g$ and $v_\ell$) in order to update the current vector $\bx \in \bbR^E$.
We want $\bv$ to satisfy the following properties: (i) $\sum_{e \in E_{\rm S}}\bv(e) \E[g_{R(\bx)}(e)] \geq \E[g_{R(\bx)}(O_{\rm S})]$,
(ii) $L(\bv) \geq \ell(O_{\rm S})$,
and (iii) $W(\bv) \leq w(O_{\rm S})$.

There are several issues in finding such a vector $\bv$:
We cannot exactly calculate $\E[g_{R(\bx)}(e)]\;(e \in E_{\rm S})$; hence, we need to estimate it.
Further, we do not know the values $\E[g_{R(\bx)}(O_{\rm S})]$ and $\ell(O_{\rm S})$.
In the subroutine presented here, we assume that their guessed values, denoted by  $\gamma$ and $\lambda$, respectively, are given as a part of the input.
Once we succeed in accurately estimating $\E[g_{R(\bx)}(e)]\;(e \in E_{\rm S})$ and the given guessed values $\gamma$ and $\lambda$ are sufficiently accurate, we can find the desired vector $\bv$ by solving a linear program.
A detailed description of the subroutine is given in Algorithm~\ref{alg:small-elements}.

\begin{algorithm}[t!]
  \caption{$\textsc{SmallElements}_{\epsilon,\delta}(g,\ell,w,E_{\rm S}, \gamma,\lambda,\bx)$}\label{alg:small-elements}
  \begin{algorithmic}[1]
  \Require{A monotone submodular function $g:2^E \to \bbR_+$, a monotone linear function $\ell:2^E \to \bbR$, a weight function $w :E \to [0,1]$, $\epsilon,\delta \in (0,1)$, a set of small elements $E_{\rm S}$,
  guessed values $\gamma,\lambda$, and a vector $\bx \in [0,1]^E$.}
  \Ensure{A vector $\bv \in [0,1]^E$.}
  \State{For each $e \in E_{\rm S}$, let $\btheta(e) = \textsc{Estimate}_{\epsilon,\epsilon/n,\delta/n}(g_{R(\bx)}(e))$.}
  \State{Find a vector $\bv \in [0,1]^E$ supported on $E_{\rm S}$ that minimizes $W(\bv)$ subject to
  \[
  \bv \cdot \btheta \geq (1-\epsilon)\gamma - \epsilon d_{g,\ell} \quad \text{and} \quad L(\bv) \geq \lambda.
  \]
  by linear programming.
  }
  \State{\Return $\bv$.}
  \end{algorithmic}
\end{algorithm}

Now, we analyze Algorithm~\ref{alg:small-elements}.
From Corollary~\ref{cor:chernoff} and the fact that $g_{S}(e) \leq d_g \leq d_{g,\ell}$ for every $S \subseteq E$ and $e \in E$, we have the following:
\begin{proposition}\label{pro:small-element-chernoff}
  With probability at least $1-\delta$, we have
  \[
    (1- \epsilon)\E[g_{R(\bx)}(e)] - \frac{\epsilon d_{g,\ell}}{n} \leq \btheta(e) \leq (1+ \epsilon)\E[g_{R(\bx)}(e)] + \frac{\epsilon d_{g,\ell}}{n}
  \] for every $e \in E_{\rm S}$.
\end{proposition}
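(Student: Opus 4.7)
The plan is to apply Corollary~\ref{cor:chernoff} element-by-element and then take a union bound. Fix $e \in E_{\rm S}$ and set $X = g_{R(\bx)}(e)$. First I would verify the boundedness hypothesis: by monotonicity of $g$, we have $g_{R(\bx)}(e) \geq 0$; by submodularity (diminishing returns), $g_{R(\bx)}(e) \leq g_\emptyset(e) = g(e) \leq d_g \leq d_{g,\ell}$. Hence each sample of $g_{R(\bx)}(e)$ lies in $[0,d_{g,\ell}]$, so Corollary~\ref{cor:chernoff} may be invoked with $d = d_{g,\ell}$. Each sample is obtained by independently drawing $R(\bx)$ and evaluating $g_{R(\bx)}(e)$, which gives the independent samples required by the corollary.

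Next, I would instantiate the corollary with the parameters used by the subroutine, namely $\alpha = \epsilon$, $\beta = \epsilon/n$, and failure probability $\delta/n$. This yields
\[
  \bigl|\btheta(e) - \E[g_{R(\bx)}(e)]\bigr| \;\leq\; \epsilon\,\E[g_{R(\bx)}(e)] + \frac{\epsilon\, d_{g,\ell}}{n}
\]
with probability at least $1 - \delta/n$, which, after rearrangement, is precisely the two-sided inequality in the statement (for this fixed $e$).

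Finally, I would union-bound over all $e \in E_{\rm S}$. Since $|E_{\rm S}| \leq |E| = n$, the probability that the inequality fails for some $e$ is at most $n \cdot \delta/n = \delta$, so it holds simultaneously for every $e \in E_{\rm S}$ with probability at least $1 - \delta$, as required.

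There is no real obstacle here: the statement is a routine combination of the relative-plus-additive Chernoff bound (Corollary~\ref{cor:chernoff}) with a union bound. The only point worth flagging explicitly in the write-up is the verification that $g_{R(\bx)}(e) \in [0, d_{g,\ell}]$, which relies on monotonicity (lower bound) and submodularity together with the definition of $d_g$ (upper bound). The choice of parameters $(\alpha,\beta,\delta') = (\epsilon, \epsilon/n, \delta/n)$ inside \textsc{Estimate} is exactly calibrated so that both the additive error and the union-bound loss come out to the claimed form.
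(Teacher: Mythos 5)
Your proposal is correct and matches the paper's (implicit) argument: the paper derives the proposition directly from Corollary~\ref{cor:chernoff} together with the bound $g_S(e) \leq d_g \leq d_{g,\ell}$, with the parameters $(\epsilon,\epsilon/n,\delta/n)$ in \textsc{Estimate} accounting for the union bound over the at most $n$ elements of $E_{\rm S}$. Your additional verification that $g_{R(\bx)}(e) \in [0,d_{g,\ell}]$ via monotonicity and diminishing returns is exactly the fact the paper cites.
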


We formalize the concept that $\gamma$ and $\lambda$ are sufficiently accurate, and then show that Algorithm~\ref{alg:small-elements} outputs a desired vector with accurate $\gamma$ and $\lambda$.
\begin{definition}\label{def:good-guess-small}
  We say that $\gamma$ and $\lambda$ are \emph{good guesses} if
  \[
    \E[g_{R(\bx)}(O_{\rm S})] \geq \gamma \geq (1-\epsilon)\E[g_{R(\bx)}(O_{\rm S})] - \epsilon d_{g,\ell}
    \quad
    \text{and}
    \quad
    \ell(O_{\rm S}) \geq \lambda \geq (1-\epsilon)\ell(O_{\rm S}) - \epsilon d_{g,\ell}
  \] hold, respectively.
\end{definition}
Since $\E[g_{R(\bx)}(O_{\rm S})] \leq nd_{g,\ell}$ and $\ell(O_{\rm S}) \leq nd_{g,\ell}$ hold, we can find good guesses by trying all the values in the set $V_{\epsilon,n}(g,\ell)$.

\begin{lemma}\label{lem:small-element-guess}
  Suppose that $\gamma$ and $\lambda$ are good guesses.
  Then, Algorithm~\ref{alg:small-elements} returns a vector $\bv \in [0,1]^E$ supported on $E_{\rm S}$ such that
  \begin{itemize}
  \item[(i)] $\sum_{e \in E_{\rm S}}\bv(e)\E[g_{R(\bx)}(e)] \geq (1-\epsilon)^3\E[g_{ R(\bx)}(O_{\rm S})] - 3\epsilon d_{g,\ell}$,
  \item[(ii)] $L(\bv) \geq (1-\epsilon)\ell(O_{\rm S})-\epsilon d_{g,\ell}$, and
  \item[(iii)] $W(\bv) \leq w(O_{\rm S})$,
  \end{itemize}
  with probability at least $1-\delta$.
  The time complexity of Algorithm~\ref{alg:small-elements} is $O(\compsmall)$.
\end{lemma}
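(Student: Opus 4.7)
The overall plan is to show that the characteristic vector $\bone_{O_{\rm S}}$ is a feasible solution to the linear program in Algorithm~\ref{alg:small-elements} with probability at least $1-\delta$; this immediately yields (iii) since $\bv$ minimizes $W$ over feasible solutions and $W(\bone_{O_{\rm S}}) = w(O_{\rm S})$. Properties (i) and (ii) then follow by translating the two LP constraints on $\bv$ back from $(\btheta,\gamma,\lambda)$ to $(\E[g_{R(\bx)}(\cdot)], \E[g_{R(\bx)}(O_{\rm S})], \ell(O_{\rm S}))$, invoking Proposition~\ref{pro:small-element-chernoff} in one direction and Definition~\ref{def:good-guess-small} in the other.

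For feasibility, I would condition on the high-probability event from Proposition~\ref{pro:small-element-chernoff}, which holds simultaneously for every $e \in E_{\rm S}$ with probability at least $1-\delta$. Summing the lower bound $\btheta(e) \geq (1-\epsilon)\E[g_{R(\bx)}(e)] - \epsilon d_{g,\ell}/n$ over $e \in O_{\rm S}$ and using $|O_{\rm S}| \leq n$, then applying submodularity $\sum_{e \in O_{\rm S}}\E[g_{R(\bx)}(e)] \geq \E[g_{R(\bx)}(O_{\rm S})]$ together with the good-guess bound $\gamma \leq \E[g_{R(\bx)}(O_{\rm S})]$, will yield $\sum_{e \in O_{\rm S}}\btheta(e) \geq (1-\epsilon)\gamma - \epsilon d_{g,\ell}$. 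Since also $L(\bone_{O_{\rm S}}) = \ell(O_{\rm S}) \geq \lambda$ by the good-guess assumption on $\lambda$, the vector $\bone_{O_{\rm S}}$ satisfies both LP constraints, establishing feasibility and thus (iii).

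For (i), I plan to invert the Chernoff bound in the opposite direction. The LP feasibility of $\bv$ gives $\bv \cdot \btheta \geq (1-\epsilon)\gamma - \epsilon d_{g,\ell}$, while the upper bound $\btheta(e) \leq (1+\epsilon)\E[g_{R(\bx)}(e)] + \epsilon d_{g,\ell}/n$ combined with $\sum_e \bv(e) \leq n$ shows that $\bv \cdot \btheta \leq (1+\epsilon)\sum_e \bv(e)\E[g_{R(\bx)}(e)] + \epsilon d_{g,\ell}$. Solving for $\sum_e \bv(e)\E[g_{R(\bx)}(e)]$, substituting $\gamma \geq (1-\epsilon)\E[g_{R(\bx)}(O_{\rm S})] - \epsilon d_{g,\ell}$, and simplifying via $1/(1+\epsilon) \geq 1-\epsilon$ will produce exactly the claimed $(1-\epsilon)^3$ factor with $3\epsilon d_{g,\ell}$ slack. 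Property (ii) is then immediate: the LP enforces $L(\bv) \geq \lambda$, and the good-guess bound $\lambda \geq (1-\epsilon)\ell(O_{\rm S}) - \epsilon d_{g,\ell}$ closes the gap.

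For the running time, the key observation is that a single sample of $R(\bx)$, drawn in $O(n)$ time, can be reused to evaluate $g_{R(\bx)}(e)$ for every $e \in E_{\rm S}$ in $O(n)$ oracle calls; applying Corollary~\ref{cor:chernoff} with $\alpha = \epsilon$, $\beta = \epsilon/n$, and failure probability $\delta/n$ requires $O(n\log(n/\delta)/\epsilon^2)$ samples, giving $O(n^2\log(n/\delta)/\epsilon^2)$ time for the estimation step. The linear program has $n$ variables and only two non-trivial linear constraints, so a generic algorithm solves it in $O(n^4)$ time. The only mildly delicate part of the whole argument is the bookkeeping of the multiplicative $(1\pm\epsilon)$ factors and additive $\epsilon d_{g,\ell}$ terms as they compose through Chernoff estimation, the LP, and the good-guess substitution to produce the precise constants in (i); the rest is essentially direct.
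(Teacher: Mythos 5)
Your proposal is correct and follows essentially the same route as the paper: condition on Proposition~\ref{pro:small-element-chernoff}, show $\bone_{O_{\rm S}}$ is feasible for the LP (which gives (iii) by minimality of $W(\bv)$), then translate the two LP constraints back via the Chernoff bounds and the good-guess inequalities to obtain (i) and (ii), with the same $O(n^2\log(n/\delta)/\epsilon^2)$ estimation cost and $O(n^4)$ LP cost. The only differences are cosmetic (you aggregate the $(1\pm\epsilon)$ conversion over the sum rather than element-wise, and you make explicit the subadditivity step $\sum_{e \in O_{\rm S}}\E[g_{R(\bx)}(e)] \geq \E[g_{R(\bx)}(O_{\rm S})]$ that the paper uses implicitly).
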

\begin{proof}
  With probability at least $1-\delta$, the consequence of Proposition~\ref{pro:small-element-chernoff} holds.
  In what follows, we assume that this occurs.

  The vector $\bone_{O_{\rm S}}$ satisfies
  \begin{align*}
    \bone_{O_{\rm S}}\cdot \btheta
    & =
    \sum_{e \in O_{\rm S}}\btheta(e)
    \geq
    \sum_{e \in O_{\rm S}}\Bigl((1-\epsilon)\E[g_{R(\bx)}(e)] - \frac{\epsilon d_{g,\ell}}{n}\Bigr) \\
    & \geq
    (1-\epsilon)\E[g_{R(\bx)}(O_{\rm S})] - \epsilon d_{g,\ell}
    \geq
    (1-\epsilon)\gamma - \epsilon d_{g,\ell}.
  \end{align*}
  Furthermore, we have $L(\bone_{O_{\rm S}}) = \ell(O_{\rm S}) \geq \lambda$.
  Hence, the vector $\bv$ is well defined, and in particular, we have $W(\bv) \leq W(O_{\rm S})$.

  Then, we have
  \begin{align*}
    \sum_{e \in E_{\rm S}}\bv(e)\E[g_{R(\bx)}(e)]
    & \geq
    (1-\epsilon)\sum_{e \in E_{\rm S}}\bv(e) \Bigl(\btheta(e)-\frac{\epsilon d_{g,\ell}}{n}\Bigr)
    \geq
    (1-\epsilon)\sum_{e \in E_{\rm S}}\bv(e) \btheta(e)-\epsilon d_{g,\ell} \\
    & \geq (1-\epsilon)\bigl((1-\epsilon)\gamma -\epsilon d_{g,\ell}\bigr)- \epsilon d_{g,\ell}
    \geq (1-\epsilon)^2\gamma- 2\epsilon d_{g,\ell} \\
    & \geq (1-\epsilon)^2 \bigl((1-\epsilon)\E[g_{R(\bx)}(O_{\rm S})] - \epsilon d_{g,\ell}\bigr) - 2\epsilon d_{g,\ell} \\
    & \geq
    (1-\epsilon)^3\E[g_{R(\bx)}(O_{\rm S})] - 3\epsilon d_{g,\ell}.
  \end{align*}

  It is easy to confirm (ii) and (iii).
  The time complexity for computing $\btheta$ is $O(n^2\log (n/\delta )/\epsilon^2)$ from Corollary~\ref{cor:chernoff}, and the time complexity for solving the linear program is $O(n^4)$ by using the ellipsoid method.
  The total time complexity is bounded by $O(\compsmall)$.
\end{proof}


\subsection{Continuous greedy algorithm with guessing}\label{subsec:general-continuous-greedy}

In this section, we present an algorithm whose goal is to output a vector $\bx \in [0,1]^E$ such that (i) $G(\bx) \geq (1-1/e)g(O)$, (ii) $L(\bx) \geq \ell(O)$, and (iii) $W(\bx) \leq w(O)$.

Our algorithm is a variant of the continuous greedy algorithm~\cite{Calinescu:2011ju} but differs in the following aspects: we consider two functions $g$ and $\ell$, and we  handle large and small elements separately.
Let $m$ be an integer given as a parameter, which is a guessed value of $|O_{\rm L}|$ (with respect to the current values of $v_g$ and $v_\ell$).
Then, we make copies $E_1,\ldots,E_{m}$ of $E$ and define a set $\widehat{E} = \bigcup_{i \in [m]}E_i \cup E_{\rm S}$.
Then, we define a function $\widehat{g} : 2^{\widehat{E}} \rightarrow \bbR_+$ as $\widehat{g}(S_1,S_2,\ldots,S_{m},S_{\rm S}) = g(S_1 \cup \cdots \cup S_{m} \cup S_{\rm S})$.
We note that $\widehat{g}$ is a monotone submodular function.
Let $\widehat{G}$ be the multilinear extension of $\widehat{g}$.

We introduce a vector $\by_i \in [0,1]^E$ for each $i \in [m]$ and another vector $\bz \in [0,1]^E$.
We always guarantee that $\by_i\;(i \in [m])$ is supported on $E_i$ and $\bz$ is supported on $E_{\rm S}$.
Our algorithm runs in $1/\epsilon$ iterations and updates the vectors $\by_i\;(i \in [m])$ and $\bz$ in each iteration.
Here, we assume that $1/\epsilon$ is an integer; otherwise, we slightly decrease $\epsilon$.
The final output is the sequence of vectors $(\by_1,\ldots,\by_{m},\bz)$, and their sum $\bx := \sum_{i \in [m]}\by_i + \bz$ will satisfy the conditions stated initially in this section.
We call the first iteration the iteration at time $0$, the second iteration the iteration at time $\epsilon$, and so on.

To explain how we update the vectors, we introduce several notations.
For $t = \set{0,\epsilon,\ldots,1}$, we define $\by^t_i\;(i \in [m])$ and $\bz^t$ as the vectors $\by_i$ and $\bz$ immediately before the iteration at time $t$.
We note that $\by^0_i = \bzero\;(i \in [m])$ and $\bz^0 = \bzero$ hold.
We define $\by^1_i\;(i \in [m])$ and $\bz^1$ as $\by_i\;(i \in [m])$ and $\bz$, respectively, after the iteration at time $1-\epsilon$.
Note that the algorithm outputs the sequence of vectors $(\by^1_1,\ldots,\by^1_m,\bz^1)$.
Then, we define $\bx^t = \sum_{i \in [m]}\by^t_i + \bz^t$.
Further, for $t \in \set{0,\epsilon,\ldots,1-\epsilon}$ and $i \in \set{0,1,\ldots,m}$, we define $\bx^t_i = \sum_{j \leq i}\by^{t+ \epsilon}_j + \sum_{j > i}\by^t_j + \bz^t$, i.e., the vector obtained after the iteration at time $t-\epsilon$ followed by updating $\by_1,\ldots,\by_i$.
Note that $\bx^t_0 = \bx^t$.

As in the argument in Section~\ref{subsec:small}, we try all possible values for guessing $|O_{\rm L}|$.
Hence, in what follows, we assume that the guessed value $m$ is correct, i.e., $m = |O_{\rm L}|$.

Let $o_1,\ldots,o_{m}$ be the large elements in $O$, i.e., $O_{\rm L} = \set{o_1,\ldots,o_m}$.
For $i \in [m]$, let $\widehat{o}_i$ be the copy of $o_i$ in $E_i$.
Then, we define $\widehat{O}_{\rm L} = \set{\widehat{o}_1,\ldots,\widehat{o}_m}$ and $\widehat{O} = \widehat{O}_{\rm L} \cup O_{\rm S} \subseteq \widehat{E}$.
For each $i \in [m]$, we update the vector $\by_i^t$ to $\by_i^{t+\epsilon}$ by finding an element $e_i^t \in E_i$ and adding the vector $\epsilon \bone_{e_i^t}$.
Here, we want the element $e_i^t$ to satisfy (i) $\E[\widehat{g}_{R(\bx^t_{i-1})}(e_i^t)] \geq \E[\widehat{g}_{R(\bx^t_{i-1})}(\widehat{o}_i)]$, (ii) $\ell(e_i^t) \geq \ell(o_i)$, and (iii) $w(e_i^t) \leq w(o_i)$.
As we do not know the values of $\E[\widehat{g}_{R(\bx^t_{i-1})}(\widehat{o}_i)]$ and $\ell(o_i)$, the algorithm requires their guessed values $\gamma_i^t$ and $\lambda_i$, respectively.
We do not have to guess $w(o_i)$ because we will choose the element with the minimum weight satisfying (i) and (ii).

Then, we update the vector $\bz^t$ to $\bz^{t+\epsilon}$ by finding a vector $\bv^t$ and adding the vector $\epsilon \bv^t$.
Here, we want the vector $\bv$ to satisfy (i) $\sum_{e \in E_{\rm S}}\bv(e)\E[\widehat{g}_{R(\bx)}(e)]  \geq  \E[\widehat{g}_{R(\bx^t_{i-1})}(O_S)]$ (note that $O_S \subseteq E_S \subseteq \widehat{E}$), (ii) $L(\bv) \geq \ell(O_{\rm S})$, and (iii) $W(\bv) \leq w(O_{\rm S})$.
Such a vector can be found by calling \textsc{SmallElements} with the guessed values $\gamma^t_{\rm S}$ and $\lambda_{\rm S}$ for $\E[\widehat{g}_{R(\bx^t_{i-1})}(O_S)]$ and $\ell(O_{\rm S})$, respectively.
A detailed description of the algorithm is given in Algorithm~\ref{alg:greedy-general}.

We will show that $\widehat{G}(\bx^{t+\epsilon})-G(\bx^t) \geq \epsilon (g(O) - \widehat{G}(\bx^{t+\epsilon}))$, which is sufficient to show that $\widehat{G}(\bx^{1})$ is close to the $(1-1/e)$-approximation to $g(O)$.
We will also show that $L(\bx^1) \geq \ell(O)$ and $W(\bx) \leq w(O)$.

\begin{algorithm}[t!]
  \caption{$\textsc{GuessingContinuousGreedy}_{\epsilon,\delta}(g,\ell,w,E_{\rm L}, E_{\rm S},m,\set{\gamma^t_i}, \set{\gamma^t_{\rm S}}, \set{\lambda_i},\lambda_{\rm S})$}\label{alg:greedy-general}
  \begin{algorithmic}[1]
    \Require{A monotone submodular function $g:2^E \to \bbR_+$, a monotone linear function $\ell:2^E \to \bbR$, a weight function $w :E \to [0,1]$, $\epsilon,\delta\in (0,1)$, a set of large and small elements $E_{\rm L}$ and $E_{\rm S}$, an integer $m\in \bbN$, guessed values $\set{\gamma^t_i}_{i\in [m],t \in \set{0,\epsilon,\ldots,1-\epsilon}}$, $\set{\gamma^t_{\rm S}}_{t \in \set{0,\epsilon,\ldots,1-\epsilon}}$, $\set{\lambda_i}_{i \in [m]}$, and $\lambda_{\rm S}$.}
    \Ensure{A vector $\bx \in [0,1]^E$.}
    \State{$\by_i\leftarrow \bzero\in [0,1]^{\widehat{E}}$ for $i \in [m]$ and $\bz\leftarrow \bzero\in [0,1]^{\widehat{E}}$.}
    \For{$(t\leftarrow 0$; $t \leq 1-\epsilon$; $t\leftarrow t+\epsilon$)}
       \For{($i \leftarrow  1$; $i \leq  m$; $i \leftarrow  i + 1$)}
        \State{$\btheta_i(e) \leftarrow \textsc{Estimate}_{\epsilon,\epsilon/m,\epsilon\delta/(2nm)}(\E[\widehat{g}_{R(\bx)}](e))$ for each $e \in E_i$.}
        \State{Let $e = \arg\min\set{w(e) \mid e \in E_i, \btheta_i(e) \geq (1-\epsilon)\gamma^t_i - \epsilon d/m, \ell(e) \geq \lambda_i}$.}
        \State{$\by_i \leftarrow \by_i + \epsilon \bone_e$.}
      \EndFor
      \State{$\bv \leftarrow \textsc{SmallElements}_{\epsilon,\epsilon\delta/2}(\widehat{g},\ell,w,E_{\rm S},\gamma^t_{\rm S},\lambda_{\rm S},\sum_{i\in[m]}\by_i + \bz)$.}
      \State{$\bz \leftarrow \bz + \epsilon \bv$.}
    \EndFor
    \State{\Return $(\by_1,\ldots,\by_{m},\bz)$.}
  \end{algorithmic}
\end{algorithm}

For $t \in \set{0,\epsilon,\ldots,1-\epsilon}$ and $i \in [m]$, let $\btheta_i^t$ be the $\btheta_i$ used in the iteration at time $t$.
From Lemma~\ref{lem:chernoff} and the union bound, we immediately have the following:
\begin{proposition}\label{pro:general-chernoff}
  With probability at least $1-\delta/2$, we have
  \begin{align*}
    (1- \epsilon )\E[\widehat{g}_{R(\bx^{t}_{i-1})}(e)] - \frac{\epsilon d_{g,\ell}}{m} & \leq \btheta^t_i(e)  \leq (1 + \epsilon )\E[\widehat{g}_{R(\bx^{t}_{i-1})}(e)] + \frac{\epsilon d_{g,\ell}}{m}
  \end{align*}
  for every $t \in \set{0,\epsilon,\ldots,1-\epsilon}$, $i \in [m]$, and $e \in E_i$.
\end{proposition}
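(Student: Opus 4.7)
The plan is to apply Corollary~\ref{cor:chernoff} to each individual invocation of \textsc{Estimate} in Algorithm~\ref{alg:greedy-general} and then take a union bound. Fix any $t \in \set{0,\epsilon,\ldots,1-\epsilon}$, $i \in [m]$, and $e \in E_i$. The quantity $\btheta^t_i(e)$ is obtained by the call $\textsc{Estimate}_{\epsilon,\epsilon/m,\epsilon\delta/(2nm)}(\widehat{g}_{R(\bx^t_{i-1})}(e))$. The random variable $\widehat{g}_{R(\bx^t_{i-1})}(e)$ is nonnegative (by monotonicity of $g$) and bounded above by $\widehat{g}(e) = g(e) \leq d_g \leq d_{g,\ell}$, so we can take $d = d_{g,\ell}$ in Corollary~\ref{cor:chernoff}. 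With $\alpha = \epsilon$ and $\beta = \epsilon/m$, the corollary guarantees that
\[
\bigl|\btheta^t_i(e) - \E[\widehat{g}_{R(\bx^t_{i-1})}(e)]\bigr| \leq \epsilon \cdot \E[\widehat{g}_{R(\bx^t_{i-1})}(e)] + \frac{\epsilon d_{g,\ell}}{m}
\]
with probability at least $1 - \epsilon\delta/(2nm)$, which is exactly the two-sided inequality in the statement for the triple $(t,i,e)$.

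Next, I would apply a union bound over all such triples. There are $1/\epsilon$ choices of $t$, $m$ choices of $i$, and at most $|E_i| = n$ choices of $e$, so the total number of estimates is at most $nm/\epsilon$. The probability that at least one of the estimates falls outside its guaranteed window is therefore at most
\[
\frac{nm}{\epsilon} \cdot \frac{\epsilon \delta}{2nm} = \frac{\delta}{2},
\]
which gives the desired overall success probability of $1-\delta/2$.

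There is essentially no obstacle here; the parameters $\alpha,\beta,\delta$ were chosen in Algorithm~\ref{alg:greedy-general} precisely to make this union bound work out. The only small points to note in writing the proof cleanly are (i) justifying the uniform bound $d = d_{g,\ell}$ on the random variable (via monotonicity and the definition of $d_g$), and (ii) observing that $|E_i| = n$ since each $E_i$ is a copy of $E$.
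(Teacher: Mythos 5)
Your proof is correct and follows exactly the route the paper intends: the paper states that the proposition follows ``from Lemma~\ref{lem:chernoff} and the union bound,'' and your argument simply spells out that application of Corollary~\ref{cor:chernoff} with $\alpha=\epsilon$, $\beta=\epsilon/m$, $d=d_{g,\ell}$, followed by a union bound over the at most $nm/\epsilon$ estimates, yielding failure probability at most $(nm/\epsilon)\cdot\epsilon\delta/(2nm)=\delta/2$. Nothing is missing.
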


We formalize the concept that $\gamma_i^t$ and $\lambda_i$ are sufficiently accurate.
\begin{definition}\label{def:good-guess-large}
  For $t \in \set{0,\epsilon,\ldots,1-\epsilon}$ and $i \in [m]$, we say that $\gamma^t_i$ is a \emph{good guess} if
  \[
    \E[\widehat{g}_{R(\bx^{t}_{i-1})}(\widehat{o}_i)] \geq \gamma^t_i \geq (1-\epsilon) \E[\widehat{g}_{R(\bx^{t}_{i-1})}(\widehat{o}_i)] - \frac{\epsilon d_{g,\ell}}{m}
  \] holds.
  For $i \in [m]$, we say that $\lambda_i$ is a \emph{good guess} if
  \[
    \ell(o_i) \geq \lambda_i \geq (1-\epsilon) \ell(o_i) - \frac{\epsilon d_{g,\ell}}{m}
  \] holds.
\end{definition}
Since $\E[\widehat{g}_{R(\bx^{t}_{i-1})}(o_i)] \leq d_{g,\ell}$ and $\ell(o_i) \leq d_{g,\ell}$ hold for every $i \in [m]$, we can find good guesses by trying all the values in the set $V_{\epsilon,m}(g,\ell)/m := \set{v / m\mid v \in V_{\epsilon,m}(g,\ell)}$.

\begin{lemma}\label{lem:large-guess}
  Suppose that the consequence of Proposition~\ref{pro:general-chernoff} holds and that $\set{\gamma_i^t}$ and $\set{\lambda_i}$ are good guesses.
  Then,
  for every $t \in \set{0,\epsilon,\ldots,1-\epsilon}$, we have the following:
  \begin{itemize}
  \item[(i)] $\E[\widehat{g}_{R(\bx^{t}_{i-1})}(e^t_i)] \geq  (1 - \epsilon)^3\E[\widehat{g}_{R(\bx^{t}_{i-1})}(\widehat{o}_i)] - 3\epsilon d_{g,\ell}/m$,
  \item[(ii)] $\ell(e^t_i) \geq  (1-\epsilon)\ell(o_i) - \epsilon d_{g,\ell}/m$ for $i \in [m]$, and
  \item[(iii)] $w(e^t_i) \leq w(o_i)$ for $i \in [m]$.
  \end{itemize}
\end{lemma}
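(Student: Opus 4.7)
The plan is to parallel the proof of Lemma~\ref{lem:small-element-guess}: first verify that the copy $\widehat{o}_i$ itself satisfies the two constraints imposed in the $\arg\min$ on line~5 of Algorithm~\ref{alg:greedy-general}, so that the minimizer $e_i^t$ is well-defined; then translate the feasibility conditions satisfied by $e_i^t$ (via the estimate $\btheta_i^t$) back to bounds on the true quantities using Proposition~\ref{pro:general-chernoff} and Definition~\ref{def:good-guess-large}.

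First I would handle feasibility of $\widehat{o}_i$. Using $\gamma_i^t \leq \E[\widehat{g}_{R(\bx_{i-1}^t)}(\widehat{o}_i)]$ and the lower-bound half of Proposition~\ref{pro:general-chernoff},
\[
\btheta_i^t(\widehat{o}_i) \;\geq\; (1-\epsilon)\E[\widehat{g}_{R(\bx_{i-1}^t)}(\widehat{o}_i)] - \frac{\epsilon d_{g,\ell}}{m} \;\geq\; (1-\epsilon)\gamma_i^t - \frac{\epsilon d_{g,\ell}}{m},
\]
and trivially $\ell(\widehat{o}_i) = \ell(o_i) \geq \lambda_i$ by the good-guess hypothesis. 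Hence $\widehat{o}_i$ lies in the candidate set, $e_i^t$ exists, and by the $\arg\min$ definition $w(e_i^t) \leq w(\widehat{o}_i) = w(o_i)$, establishing~(iii). Condition~(ii) is equally short: $e_i^t$ must satisfy $\ell(e_i^t) \geq \lambda_i \geq (1-\epsilon)\ell(o_i) - \epsilon d_{g,\ell}/m$.

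The only place where a calculation is needed is~(i). From the feasibility constraint on $e_i^t$ together with the upper-bound half of Proposition~\ref{pro:general-chernoff},
\[
(1+\epsilon)\E[\widehat{g}_{R(\bx_{i-1}^t)}(e_i^t)] + \frac{\epsilon d_{g,\ell}}{m} \;\geq\; \btheta_i^t(e_i^t) \;\geq\; (1-\epsilon)\gamma_i^t - \frac{\epsilon d_{g,\ell}}{m}.
\]
Substituting $\gamma_i^t \geq (1-\epsilon)\E[\widehat{g}_{R(\bx_{i-1}^t)}(\widehat{o}_i)] - \epsilon d_{g,\ell}/m$, rearranging, and using $1/(1+\epsilon) \geq 1-\epsilon$ gives
\[
\E[\widehat{g}_{R(\bx_{i-1}^t)}(e_i^t)] \;\geq\; \frac{(1-\epsilon)^2\E[\widehat{g}_{R(\bx_{i-1}^t)}(\widehat{o}_i)] - 3\epsilon d_{g,\ell}/m}{1+\epsilon} \;\geq\; (1-\epsilon)^3\E[\widehat{g}_{R(\bx_{i-1}^t)}(\widehat{o}_i)] - \frac{3\epsilon d_{g,\ell}}{m},
\]
which is precisely~(i).

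There is no real obstacle; the proof is a direct rerun of the argument in Lemma~\ref{lem:small-element-guess} but applied per element rather than per LP. The only mild bookkeeping point is to keep the additive error at $3\epsilon d_{g,\ell}/m$ (not $3\epsilon d_{g,\ell}$), which is why Algorithm~\ref{alg:greedy-general} calls \textsc{Estimate} with additive tolerance $\epsilon/m$ and Definition~\ref{def:good-guess-large} uses $\epsilon d_{g,\ell}/m$; these three matching $1/m$ factors are exactly what lets the three $\epsilon d_{g,\ell}/m$ slacks combine rather than blow up.
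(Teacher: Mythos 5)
Your proof is correct and follows essentially the same route as the paper's: verify that $\widehat{o}_i$ satisfies both feasibility constraints in the $\arg\min$ on line~5 (using the lower bound of Proposition~\ref{pro:general-chernoff} and the good-guess lower bounds), deduce that $e_i^t$ exists with $w(e_i^t) \leq w(o_i)$, read off (ii) from $\ell(e_i^t) \geq \lambda_i$, and for (i) chain the upper bound of Proposition~\ref{pro:general-chernoff} with the feasibility of $e_i^t$ and the good-guess lower bound on $\gamma_i^t$. Your write-up is if anything a bit more explicit about the final rearrangement (and avoids the paper's small typo of writing $\bx_i^t$ where $\bx_{i-1}^t$ is meant), but the argument is the same.
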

\begin{proof}
  Fix $t \in \set{0,\epsilon,\ldots,1-\epsilon}$ and $i \in [m]$.
  Note that we have
  \[
    \btheta^t_i(o_i)
    \geq
    (1-\epsilon)\E[\widehat{g}_{R(\bx^{t}_{i-1})}(\widehat{o}_i)]-\frac{\epsilon d_{g,\ell}}{m}
    \geq
    (1-\epsilon)\gamma^t_i- \frac{\epsilon d_{g,\ell}}{m}
  \] and
  $\ell(o_i) \geq \lambda_i$.
  Since $o_i$ (in $E_i$) is a candidate for $e^t_i$, the element $e^t_i$ is well defined.
  In particular, we have $w(e^t_i) \leq w(o_i)$ because $e^t_i$ is chosen as the element with the minimum element satisfying the conditions.

  We have
  \[
    (1+\epsilon)\E[\widehat{g}_{R(\bx^{t}_{i-1})}(e^t_i)]+\frac{\epsilon d_{g,\ell}}{m}
    \geq
    \btheta^t_i(e^t_i) \geq
    (1-\epsilon)\gamma^t_i - \frac{\epsilon d_{g,\ell}}{m}
    \geq
    (1-\epsilon)^2\E[\widehat{g}_{R(\bx^t_i)}(o_i)]-\frac{2\epsilon d_{g,\ell}}{m}.
  \]
  Rearranging this inequality, we get (i).
  Further, (ii) is immediate from the fact that $\ell(e_i^t) \geq \lambda_i$.
\end{proof}

We say that $\gamma_{\rm S}^t$ for $t \in \set{0,\epsilon,\ldots,1-\epsilon}$ and $\lambda_{\rm S}$ are \emph{good guesses} if they are good guesses in the sense of Definition~\ref{def:good-guess-small}.
Then, we have the following:
\begin{lemma}\label{lem:continuous-greedy}
  Suppose that $\set{\gamma_i^t},\set{\lambda_i}$, $\set{\gamma_{\rm S}^t}$, and $\lambda_{\rm S}$ are good guesses.
  Then, Algorithm~\ref{alg:greedy-general} returns vectors $\by_1,\ldots,\by_{m},\bz$ such that $\bx := \sum_{i \in [m]}\by_i + \bz$ satisfies the following:
  \begin{itemize}
  \item[(i)] $\widehat{G}(\bx) \geq (1-1/e-O(\epsilon))g(O)-6\epsilon d_{g,\ell}$,
  \item[(ii)] $L(\bx) \geq (1-O(\epsilon))\ell(O)- 2 \epsilon d_{g,\ell}$, and
  \item[(iii)] $W(\bx) \leq w(O)$,
  \end{itemize}
  with probability at least $1-\delta$.
  The running time is $O(\compgeneral)$.
\end{lemma}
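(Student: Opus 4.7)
The plan is to establish a per-iteration progress bound matching the standard continuous-greedy recursion, solve it to obtain (i), and sum the per-iteration guarantees of Lemmas~\ref{lem:large-guess} and~\ref{lem:small-element-guess} for (ii) and (iii). The first step is to condition on the success of all randomized sub-components: by Proposition~\ref{pro:general-chernoff} combined with a union bound over the $1/\epsilon$ calls to \textsc{SmallElements}, all estimates $\btheta^t_i$ are accurate and every \textsc{SmallElements} call returns a vector satisfying Lemma~\ref{lem:small-element-guess} with overall probability at least $1-\delta$. We condition on this event throughout, so Lemma~\ref{lem:large-guess} applies at each $(t,i)$ and Lemma~\ref{lem:small-element-guess} applies at each $t$.

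Next, I would fix $t\in\set{0,\epsilon,\ldots,1-\epsilon}$ and derive the per-iteration inequality
\[
\widehat{G}(\bx^{t+\epsilon}) - \widehat{G}(\bx^t) \geq (1-\epsilon)^3\,\epsilon\,\bigl(g(O) - \widehat{G}(\bx^{t+\epsilon})\bigr) - 6\epsilon^2 d_{g,\ell}.
\]
I would split the update into its large-element part ($\bx^t \to \bx^t_m$) and small-element part ($\bx^t_m \to \bx^{t+\epsilon}$). For the large part, multilinearity of $\widehat{G}$ gives the exact identity $\widehat{G}(\bx^t_m) - \widehat{G}(\bx^t) = \sum_{i\in [m]} \epsilon\,\partial_{e_i^t}\widehat{G}(\bx^t_{i-1})$; together with $\partial_e\widehat{G}(\bx) \geq \E[\widehat{g}_{R(\bx)}(e)]$ this lets Lemma~\ref{lem:large-guess}(i) apply term-by-term. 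For the small part, Lemma~\ref{lem:discretization} gives $\widehat{G}(\bx^{t+\epsilon})-\widehat{G}(\bx^t_m) \geq \epsilon\sum_{e\in E_{\rm S}} \bv^t(e)\,\E[\widehat g_{R(\bx^{t+\epsilon})}(e)]$, and Lemma~\ref{lem:small-element-guess}(i) ties this to $\E[\widehat g_{R(\cdot)}(O_{\rm S})]$. Summing, and using submodularity (decreasing marginals) to align the reference vectors, subadditivity of marginals $\E[\widehat g_{R(\bx^{t+\epsilon})}(\widehat O)]\leq \sum_i \E[\widehat g_{R(\bx^{t+\epsilon})}(\widehat o_i)] + \E[\widehat g_{R(\bx^{t+\epsilon})}(O_{\rm S})]$, and monotonicity $g(O)-\widehat G(\bx^{t+\epsilon}) \leq \E[\widehat g_{R(\bx^{t+\epsilon})}(\widehat O)]$, yields the displayed bound. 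The main technical hurdle is that Lemmas~\ref{lem:large-guess}(i) and~\ref{lem:small-element-guess}(i) state their guarantees at three different reference vectors ($\bx^t_{i-1}$, $\bx^t_m$, and via Lemma~\ref{lem:discretization} effectively $\bx^{t+\epsilon}$); reconciling them via submodularity while keeping the per-iteration error at the stated $6\epsilon^2 d_{g,\ell}$ requires care.

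Writing $\Delta^t := g(O) - \widehat{G}(\bx^t)$, the per-iteration bound rearranges to $\Delta^{t+\epsilon} \leq (1-\epsilon+O(\epsilon^2))\Delta^t + O(\epsilon^2)d_{g,\ell}$, which unrolls over $1/\epsilon$ iterations to $\Delta^1 \leq (1/e+O(\epsilon))g(O) + O(\epsilon)d_{g,\ell}$, giving (i). For (ii) and (iii), since $\bx = \sum_t \epsilon\bigl(\sum_{i\in[m]} \bone_{e_i^t} + \bv^t\bigr)$ and the $\bone_{e_i^t}$ and $\bv^t$ are supported on disjoint copies, Lemmas~\ref{lem:large-guess}(iii) and~\ref{lem:small-element-guess}(iii) give $W(\bx)\leq \sum_t \epsilon\,(w(O_{\rm L})+w(O_{\rm S})) = w(O)$, and the analogous summation with Lemmas~\ref{lem:large-guess}(ii) and~\ref{lem:small-element-guess}(ii) gives (ii). The running time follows by totaling, across $1/\epsilon$ iterations, the cost of $mn$ \textsc{Estimate} calls (each drawing $O(m\log(nm/(\epsilon\delta))/\epsilon^2)$ samples by Corollary~\ref{cor:chernoff}) plus one call to \textsc{SmallElements} (whose cost is given in Lemma~\ref{lem:small-element-guess}), matching the stated bound.
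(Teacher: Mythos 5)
Your proposal is correct and follows essentially the same route as the paper: condition on the success of all estimates and \textsc{SmallElements} calls via a union bound, split each iteration's gain into the large-element part (coordinate-wise multilinearity plus Lemma~\ref{lem:large-guess}(i)) and the small-element part (Lemma~\ref{lem:discretization} plus Lemma~\ref{lem:small-element-guess}(i)), align the reference vectors by submodularity to reach $\epsilon(1-\epsilon)^3\bigl(g(O)-\widehat{G}(\bx^{t+\epsilon})\bigr)-6\epsilon^2 d_{g,\ell}$, unroll the recursion for (i), and sum the per-iteration bounds for (ii), (iii), and the running time. The only cosmetic difference is that you realign marginals at the common vector $\bx^{t+\epsilon}$ using subadditivity, whereas the paper telescopes over $R(\bx^t_m)\cup\set{\widehat{o}_k \mid k<j}$; both are routine applications of submodularity with the same error budget.
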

\begin{proof}
  With probability $1-\delta/2$, the consequence of Proposition~\ref{pro:general-chernoff} holds.
  Further, with probability $1-\delta/2$, all the invocations of \textsc{SmallElements} succeed in outputting vectors with the guarantees in Lemma~\ref{lem:small-element-guess}.
  By the union bound, all these occur with probability at least $1-\delta$.
  In what follows, we assume that this occurs.

  First, we check~(i).
  For each $t \in \set{0,\epsilon,\ldots,1-\epsilon}$, we have
  \begin{align*}
    & \widehat{G}(\bx^t_{m}) - \widehat{G}(\bx^t)
    = \sum_{j=1}^{m}\Bigl(\widehat{G}(\bx^t_{j-1} + \epsilon \bone_{e^t_j}) - \widehat{G}(\bx^t_{j-1})\Bigr)\\
    & \geq \epsilon\sum_{j=1}^{m}\Bigl(\E[\widehat{g}_{R(\bx^t_{j-1})}(e^t_j)]\Bigr) \tag{By concavity of $\widehat{G}$}\\
    & \geq \epsilon(1 - \epsilon)^3\sum_{j=1}^{m}\Bigl(\E[\widehat{g}_{R(\bx^t_{j-1})}(\widehat{o}_j)] - \frac{3\epsilon d_{g,\ell}}{m}\Bigr) \tag{By (i) of Lemma~\ref{lem:large-guess}}\\
    & \geq \epsilon(1 - \epsilon)^3\sum_{j=1}^{m} \Bigl(\E\bigl[\widehat{g}_{R(\bx^t_{m}) \cup \set{\widehat{o}_k \mid k \in [j-1] }}(\widehat{o}_j) \bigr]\Bigr) -3\epsilon^2  d_{g,\ell} \\
    & = \epsilon(1 - \epsilon)^3\Bigl(\E\bigl[\widehat{g}(R(\bx^{t}_{m}) \cup \widehat{O}_{\rm L}) - \widehat{g}(R(\bx^{t}_{m}))\bigr]\Bigr)- 3\epsilon^2 d_{g,\ell} \\
    & \geq \epsilon(1 - \epsilon)^3(\E[\widehat{g}_{R(\bx^t_{m})}(\widehat{O}_{\rm L})] ) - 3\epsilon^2 d_{g,\ell}.
  \end{align*}

  For each $t \in \set{0,\epsilon,\ldots,1-\epsilon}$, we have
  \begin{align*}
    & \widehat{G}(\bx^{t+\epsilon}) - \widehat{G}(\bx^t_{m})
    =
    \widehat{G}(\bx^{t}_{m} + \epsilon \bv) - \widehat{G}(\bx^t_{m}) \\
    & \geq
    \epsilon \sum_{e \in E}\bv(e) \E[\widehat{g}_{R(\bx^{t+\epsilon})}(e)] \tag{by Lemma~\ref{lem:discretization}} \\
    & \geq
    \epsilon \bigl((1-\epsilon)^3 \E[\widehat{g}_{R(\bx^{t+\epsilon})}(O_{\rm S})] - 3 \epsilon d_{g,\ell} \bigr) \tag{by (i) of Lemma~\ref{lem:small-element-guess}}\\
    & \geq
    \epsilon (1-\epsilon)^3 \E[\widehat{g}_{R(\bx^{t+\epsilon})}(O_{\rm S})] - 3\epsilon^2 d_{g,\ell}.
  \end{align*}

  Combining these two inequalities, we get
  \begin{align*}
    & \widehat{G}(\bx^{t+\epsilon}) - \widehat{G}(\bx^t) \\
    & \geq
    \epsilon(1 - \epsilon)^3(\E[\widehat{g}_{R(\bx^t_{m})}(\widehat{O}_{\rm L})] )- 3\epsilon^2 d_{g,\ell}
    +
    \epsilon (1-\epsilon)^3 \E[\widehat{g}_{R(\bx^{t+\epsilon})}(O_{\rm S} )]  - 3 \epsilon^2 d_{g,\ell}\\
    & \geq
    \epsilon (1 - \epsilon)^3\E[\widehat{g}_{R(\bx^{t+\epsilon})}(\widehat{O})] - 6 \epsilon^2 d_{g,\ell}\\
    & \geq
    \epsilon (1 - \epsilon)^3\bigl(\widehat{g}(\widehat{O}) - \widehat{G}(\bx^{t+\epsilon})\bigr) - 6\epsilon^2 d_{g,\ell}
    =
    \epsilon (1 - \epsilon)^3\bigl(g(O) - \widehat{G}(\bx^{t+\epsilon})\bigr) - 6\epsilon^2 d_{g,\ell}.
  \end{align*}

  Rewriting the above inequality, we get
  \[
    g(O) - \frac{\beta}{\alpha} - \widehat{G}(\bx^{t + \epsilon})  \leq \frac{g(O) - \frac{\beta}{\alpha} - \widehat{G}(\bx^t)}{1+\alpha},
  \]
  where $\alpha = \epsilon(1 - \epsilon)^3$ and $\beta = 6\epsilon^2 d_{g,\ell}$.
  Then, by induction, we can prove that
  \[
    g(O) - \frac{\beta}{\alpha} - \widehat{G}(\bx^t) \leq \frac{1}{(1+\alpha)^{t/\epsilon}}\Bigl(g(O) - \frac{\beta}{\alpha}\Bigr).
  \]
  Substituting $t = 1$ and rewriting once again, we get
  \[
    \widehat{G}(\bx^1) \geq \Bigl(1 - \frac{1}{(1+\alpha)^{1/\epsilon}}\Bigr)(g(O)- \frac{\beta}{\alpha})
    \geq \Bigl(1-\frac{1}{e}-O(\epsilon )\Bigr)\Bigl(g(O)-\frac{\beta}{\alpha}\Bigr)
    \geq \Bigl(1-\frac{1}{e}-O(\epsilon )\Bigr)g(O) - 6\epsilon d_{g,\ell},
  \]
  assuming that $\epsilon$ is sufficiently small, say, less than $1/2$.

  To see (ii), we have for $t \in \set{0,\ldots,1-\epsilon}$ that
  \begin{align*}
    L(\bx^{t+\epsilon}) - L(\bx^{t})
    & = L\Bigl(\bx^t + \epsilon \sum_{i=1}^{m} \bone_{e^t_i} + \epsilon \bv\Bigr) - L(\bx^t) \\
    & = \epsilon \sum_{i=1}^{m}\ell(e^t_i) + \epsilon (1-\epsilon)\ell(O_{\rm S}) - \epsilon^2 d_{g,\ell} \tag{by (ii) of Lemma~\ref{lem:small-element-guess}}\\
    & \geq \epsilon((1-\epsilon)\ell(O_{\rm L})-\epsilon d_{g,\ell}) + \epsilon (1-\epsilon)\ell(O_{\rm S}) - \epsilon^2 d_{g,\ell}  \tag{by (ii) of Lemma~\ref{lem:large-guess}}\\
    & \geq \epsilon(1-\epsilon)\ell(O)- 2\epsilon^2 d_{g,\ell}.
  \end{align*}
  By induction, we get $L(\bx^1) \geq (1-\epsilon)\ell(O) - 2\epsilon d_{g,\ell}$.

  To see (iii), we have for $t \in \set{0,\ldots,1-\epsilon}$ that
  \begin{align*}
    W(\bx^{t+\epsilon}) - W(\bx^t)
    & = W\Bigl(\bx^t + \epsilon \sum_{i=1}^{m} \bone_{e^t_i} + \epsilon \bv\Bigr) - W(\bx^t) \\
    & \leq \epsilon (w(O_{\rm L})+ w(O_{\rm S}) ) \tag{By (iii)'s of Lemmas~\ref{lem:small-element-guess} and~\ref{lem:large-guess}}\\
    & = \epsilon w(O).
  \end{align*}
  By induction, we get $W(\bx^1) \leq w(O)$.

  Finally, we analyze the time complexity.
  For estimating the $\btheta_i$'s, we need $O(\frac{nm}{\epsilon} \cdot \frac{m}{\epsilon^2}\log \frac{nm}{\epsilon \delta}) = O(\frac{nm^2}{\epsilon^3}\log \frac{n m}{\epsilon \delta})$ time.
  The time complexity of \textsc{SmallElements} is at most $O(\frac{1}{\epsilon} \cdot \bigl(n^4+\frac{n^2}{\epsilon^2}\log \frac{1}{\epsilon \delta}) = O(\frac{n^4}{\epsilon}+\frac{n^2}{\epsilon^3}\log \frac{1}{\epsilon \delta})$.
  Hence, the running time is as desired.
\end{proof}



\subsection{Rounding}\label{subsec:rounding}
In this section, we explain how to round the vectors obtained by \textsc{GuessingContinuousGreedy} (Algorithm~\ref{alg:greedy-general}).

Let $(\by_1,\ldots,\by_m,\bz)$ be the vectors obtained by \textsc{GuessingContinuousGreedy}, and let  $\bv^{t}$ be the vector supported on $E_{\rm S}$ obtained in the iteration at time $t$ in \textsc{GuessingContinuousGreedy}.
Note that $\bz = \sum_{t \in \set{0,\epsilon,\ldots,1-\epsilon}}\bv^t$.
Our algorithm is summarized in Algorithm~\ref{alg:rounding}.

\begin{algorithm}[t!]
  \caption{$\textsc{Rounding}_{\epsilon}(w,E_{\rm L}, E_{\rm S},m,\set{\by_i},\bz)$}\label{alg:rounding}
  \begin{algorithmic}[1]
    \Require{A weight function $w:E \to [0,1]$, $E_{\rm _L},E_{\rm S} \subseteq E$, a set of vectors $\set{\by_i}_{i \in [m]}$, and a vector $\bz$.}
    \Ensure{A set $S \subseteq E$.}
    \State{$S_{\rm L} \leftarrow \emptyset$, $S_{\rm S} \leftarrow \emptyset$.}
    \State{Define $\bz' \in [0,1]^{E_{\rm S}}$ as $\bz'(e) = (1-\epsilon)\bz(e)$ if $w(e) < \epsilon^3 \max_t W(\bv^t )$ and $\bz'(e) = 0$ otherwise.}
    \State{For each $e \in E_{\rm S}$, add it to $S_{\rm S}$ independently with probability $\bz'(e)$.}
    \For{$i \in [m]$}
      \State{Add exactly one element in $E_i$ to $S_{\rm L}$ (as an element of $E$), where an element $e \in E_i$ is chosen with probability $\by_i(e)$.}
    \EndFor
    \If{$w(S_{\rm S} \cup S_{\rm L}) \leq 1$}
      \State{\Return $S_{\rm S} \cup S_{\rm L}$.}
    \Else
      \State{\Return $\emptyset$.}
    \EndIf
  \end{algorithmic}
\end{algorithm}

We use the following lemma to analyze the objective value of the output set.
\begin{lemma}[Lemma~3.7 of~\cite{Calinescu:2011ju}]\label{lem:single-rounding}
  Let $E = E_1 \cup \cdots \cup E_k$, let $f : 2^E \to \bbR_+$ be a monotone submodular function, and for all $i \neq j$, we have $E_i \cap E_j = \emptyset$.
  Let $\bx \in \bbR_+^E$ such that for each $E_i$ we have $\bx(E_i) \leq 1$.
  If $T$ is a random set where we sample independently from each $E_i$ at most one random element, i.e., element $e$ with probability $\bx(e)$, then
  \[
    \E[f(T)] \geq F(\bx).
  \]
\end{lemma}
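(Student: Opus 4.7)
The plan is to reduce the statement to the single-part case $k=1$ via a hybrid argument, and then settle that case using the subadditivity bound that follows from submodularity.

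First, I would introduce a sequence of hybrid distributions $Y_0, Y_1, \ldots, Y_k$ on subsets of $E$, where $Y_j$ samples each $E_i$ with $i \le j$ according to the ``at most one per part'' rule (element $e \in E_i$ chosen with probability $\bx(e)$, and nothing chosen with probability $1-\bx(E_i)$), and samples each $E_i$ with $i > j$ by including each $e \in E_i$ independently with probability $\bx(e)$, all parts being mutually independent. Thus $Y_0$ has the distribution of $R(\bx)$ and $Y_k$ has the distribution of $T$. It then suffices to prove $\E[f(Y_{j-1})] \le \E[f(Y_j)]$ for every $j \in [k]$. Since the parts are independent under both $Y_{j-1}$ and $Y_j$, I condition on the realization $A$ of the coordinates outside $E_j$, which is identically distributed under the two hybrids. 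The marginal function $f_A$ defined as in Section~\ref{sec:pre} is monotone submodular with $f_A(\emptyset) = 0$, and the hybrid inequality reduces to $\E[f_A(R')] \le \E[f_A(T')]$, where $R'$ samples each $e \in E_j$ independently with probability $\bx(e)$ and $T'$ selects at most one element of $E_j$ as above.

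For this single-part comparison, a direct calculation yields $\E[f_A(T')] = \sum_{e \in E_j} \bx(e)\, f_A(\set{e})$, since $f_A(\emptyset)=0$. On the other side, iterated application of submodularity gives the subadditivity bound $f_A(S) \le \sum_{e \in S} f_A(\set{e})$ for every $S \subseteq E_j$; taking expectations and using linearity of inclusion probabilities produces $\E[f_A(R')] \le \sum_{e \in E_j} \bx(e)\, f_A(\set{e})$. Combining the two inequalities proves the hybrid step, and chaining across $j = 1, \ldots, k$ finishes the proof.

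The only step that needs care is verifying that the product structure across parts is preserved by each hybrid distribution so that conditioning on the coordinates outside $E_j$ is legitimate; everything else is routine. No structural hypothesis on $f$ beyond monotonicity and submodularity is used, in line with what Lemma~\ref{lem:single-rounding} asserts, and the argument does not depend on $\bx(E_i)$ being exactly one, only on $\bx(E_i) \le 1$.
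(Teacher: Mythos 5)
The paper does not prove this lemma --- it is imported verbatim as Lemma~3.7 of~\cite{Calinescu:2011ju} --- so there is no in-paper argument to compare against; judged on its own, your proof is correct and self-contained. The hybrid chain $Y_0,\ldots,Y_k$ is well defined, the conditioning step is legitimate exactly for the reason you flag (in both $Y_{j-1}$ and $Y_j$ the parts are mutually independent and the distribution of the configuration outside $E_j$ is the same, so one may fix its realization $A$ and compare only the sampling inside $E_j$ against $f_A$), and the single-part comparison is sound: $\E[f_A(T')] = \sum_{e \in E_j}\bx(e) f_A(\set{e})$ because $f_A(\emptyset)=0$ and the at-most-one distribution is supported on singletons and the empty set (this is where $\bx(E_j)\leq 1$ is used), while $f_A(S) \leq \sum_{e \in S} f_A(\set{e})$ follows from the telescoping/diminishing-returns argument, giving $\E[f_A(R')] \leq \sum_{e\in E_j}\bx(e)f_A(\set{e})$ after taking expectations. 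Chaining over $j$ yields $F(\bx)=\E[f(Y_0)] \leq \E[f(Y_k)] = \E[f(T)]$ as required. Two small remarks: your argument in fact never uses monotonicity (subadditivity of $f_A$ needs only submodularity of $f$ and $f_A(\emptyset)=0$), so the claim holds for any nonnegative submodular $f$; and this conditioning-plus-subadditivity route is essentially the standard way the statement is established in~\cite{Calinescu:2011ju}, so you are not losing generality or strength relative to the cited source.
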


\begin{lemma}\label{lem:rounding-objective-values}
  We have $\E[g(S_{\rm L} \cup S_{\rm S})] \geq (1 - \epsilon)\widehat{G}(\bx) - \epsilon^3 v_g$ and $\E[\ell(S_{\rm L} \cup S_{\rm S})] \geq (1 - \epsilon)L(\bx) - \epsilon^3 v_\ell$.
\end{lemma}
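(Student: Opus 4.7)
The plan is to view $S_{\rm L} \cup S_{\rm S}$ as a sample from an auxiliary vector $\bx' \in [0,1]^{\widehat{E}}$, apply Lemma~\ref{lem:single-rounding} to conclude $\E[g(S_{\rm L}\cup S_{\rm S})] \geq \widehat{G}(\bx')$, and then compare $\widehat{G}(\bx')$ with $\widehat{G}(\bx)$ using submodularity and the concavity of the multilinear extension along nonnegative directions. The linear part follows in parallel from the multilinearity of $L$.

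Let $H = \{e \in E_{\rm S} : w(e) \geq \epsilon^3 \max_t W(\bv^t)\}$ denote the heavy small elements. Define $\bx'$ to equal $\by_i$ on each copy $E_i$, $(1-\epsilon)\bz(e)$ on $E_{\rm S}\setminus H$, and $0$ on $H$. Note that $\by_i(E_i) = 1$ since $\by_i$ aggregates $1/\epsilon$ updates of magnitude $\epsilon$; thus Algorithm~\ref{alg:rounding} samples exactly one element per group $E_i$ and independently from each small singleton, with each $e \in \widehat{E}$ retained with probability $\bx'(e)$. Applying Lemma~\ref{lem:single-rounding} to $\widehat{g}$ (with each $\{e\}$, $e \in E_{\rm S}$, treated as an additional singleton group) yields $\E[g(S_{\rm L}\cup S_{\rm S})] \geq \widehat{G}(\bx')$, and linearity of $L$ gives $\E[\ell(S_{\rm L}\cup S_{\rm S})] = L(\bx')$ exactly.

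To relate $\bx'$ to $\bx = \sum_i \by_i + \bz$, the main quantitative estimate is $\bz(H) \leq \epsilon^{-3}$. Indeed, $\sum_{e \in H}\bv^t(e)\,w(e) \leq W(\bv^t) \leq \max_{t'} W(\bv^{t'})$ forces $\sum_{e \in H}\bv^t(e) \leq \epsilon^{-3}$ for each iteration $t$, and $\bz = \epsilon\sum_t \bv^t$ aggregates $1/\epsilon$ such terms. Let ${\bm u}$ denote $\bx$ with the $H$-coordinates zeroed out. The smallness condition $g(e) \leq \epsilon^6 v_g$ on $E_{\rm S}$ together with the submodular bound $\partial_e \widehat{G}(\cdot) \leq g(e)$ yields
\[
\widehat{G}(\bx) - \widehat{G}({\bm u}) \leq \sum_{e \in H} \bz(e)\,g(e) \leq \epsilon^6 v_g \cdot \bz(H) \leq \epsilon^3 v_g.
\]
Since $\bx' = \epsilon\,{\bm u}|_{E_{\rm L}} + (1-\epsilon)\,{\bm u}$ lies on the segment from ${\bm u}|_{E_{\rm L}}$ to ${\bm u}$, a nonnegative-direction segment on which $\widehat{G}$ is concave, we obtain $\widehat{G}(\bx') \geq (1-\epsilon)\widehat{G}({\bm u}) \geq (1-\epsilon)\widehat{G}(\bx) - \epsilon^3 v_g$, establishing~(i). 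The analogous direct computation, using linearity of $L$ and $\ell(e) \leq \epsilon^6 v_\ell$ on $E_{\rm S}$, gives $L(\bx') \geq (1-\epsilon)L(\bx) - \epsilon^3 v_\ell$, proving~(ii).

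The main obstacle is calibrating the heavy/light threshold: the factor $\epsilon^3$ in the definition of $H$ is tuned precisely so that the bound $\bz(H) \leq \epsilon^{-3}$ meshes with the $\epsilon^6$-smallness of $g$ and $\ell$ to produce the target $\epsilon^3 v_g$ and $\epsilon^3 v_\ell$ additive losses, while leaving a $(1-\epsilon)$ slack for the concavity step. A looser threshold would break this arithmetic, and a tighter one would destroy the weight concentration used in the subsequent analysis of the output feasibility.
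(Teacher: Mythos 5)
Your proof follows the paper's argument essentially verbatim: you define the same auxiliary vector $\bx'$, invoke Lemma~\ref{lem:single-rounding} with the groups $E_1,\ldots,E_m$ together with the small-element singletons, and account for the loss by combining the bound $\bz(H)\leq \epsilon^{-3}$ (implicit in the paper) with the $\epsilon^6$-smallness of $g$ and $\ell$, plus a monotonicity/concavity step that produces the $(1-\epsilon)$ factor. The only cosmetic difference is that you scale just the small-element part along the segment from ${\bm u}|_{E_{\rm L}}$ to ${\bm u}$, whereas the paper scales the whole vector by $1-\epsilon$; the bounds obtained are identical.
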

\begin{proof}
  Let $\bx' = \sum_{i\in [m]}\by_i + \bz'$.
  First, let us relate the value of the vector $\bx$ to that of $\bx'$.
  \begin{align*}
    \widehat{G}(\bx')=\widehat{G}\Bigl(\sum_{i \in [m]} \by_i + \bz' \Bigr)
    & \geq \widehat{G}\Bigl(\sum_{i \in [m]} \by_i + (1-\epsilon)\bz\Bigr) - \sum_{e \in E_{\rm S}: w(e) \geq \epsilon^3 \max_t W(\bv^t)}\bz(e)g(e) \\
    & \geq \widehat{G}\Bigl((1-\epsilon)(\sum_{i \in [m]} \by_i+\bz)\Bigr) - \max_{e\in E_{\rm S}} g(e) \sum_{e \in E_{\rm S}:w(e) \geq \epsilon^3 \max_t W(\bv^t)}\bz(e) \\
    & \geq (1-\epsilon)\widehat{G}(\bx)- \epsilon^6 v_g \cdot \frac{1}{\epsilon^3}
    \geq (1-\epsilon)\widehat{G}(\bx)-\epsilon^3 v_g.
  \end{align*}

  Next, we note that we get $S_{\rm L}$ by selecting exactly one random element from each $E_i$, which is a copy of $E_{\rm L}$, and we get $S_{\rm S}$ by sampling independently from $\bv$.
  Hence, by applying Lemma~\ref{lem:single-rounding} with sets $E_1,\ldots,E_{m}$ and sets $\set{\set{e} \mid e \in S_{\rm S}}$, we get
  \[
    \E[g(S_{\rm L} \cup S_{\rm S})]
    \geq (1-\epsilon)\widehat{G}(\bx) - \epsilon^3 v_g.
  \]

  By a similar argument, we get $L(\bx') \geq (1-\epsilon)L(\bx) - \epsilon^3 v_\ell$, and we have $\E[\ell(S_{\rm L} \cup S_{\rm S})] \geq (1-\epsilon)L(\bx)-\epsilon^3 v_\ell$.
\end{proof}

Next, we show that the probability that the weight of the output set exceeds $w(O)$ decays exponentially.
\begin{lemma}\label{lem:rounding-weight}
  For any $\gamma \geq 1$, we have $w(S_{\rm L} \cup S_{\rm S}) \leq \gamma w(O)$ with probability $1-\exp\bigl(-\Omega(\gamma/\epsilon^2)\bigr)$.
\end{lemma}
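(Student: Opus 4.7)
The plan is to separate $w(S_{\rm L} \cup S_{\rm S}) = w(S_{\rm L}) + w(S_{\rm S})$, bound $w(S_{\rm L})$ deterministically, and apply Chernoff to $w(S_{\rm S})$. For each $i \in [m]$ the support of $\by_i$ consists precisely of the elements $e_i^t$ produced by $\textsc{GuessingContinuousGreedy}$, and Lemma~\ref{lem:large-guess}(iii) gives $w(e_i^t) \leq w(o_i)$ for every iteration $t$. Since $\textsc{Rounding}$ samples exactly one element per $E_i$, the element contributed to $S_{\rm L}$ from $E_i$ has weight at most $w(o_i)$, and summing over $i$ yields $w(S_{\rm L}) \leq \sum_{i=1}^m w(o_i) = w(O_{\rm L})$ with probability one. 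Consequently
\[
  \Pr\bigl[w(S_{\rm L} \cup S_{\rm S}) > \gamma w(O)\bigr]
  \leq
  \Pr\bigl[w(S_{\rm S}) > (\gamma - 1)w(O) + w(O_{\rm S})\bigr],
\]
so only the small-element part is genuinely random.

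Next I would analyze $w(S_{\rm S}) = \sum_{e \in E_{\rm S}} w(e) B_e$, where $B_e \sim \mathrm{Bernoulli}(\bz'(e))$ are independent. By the definition of $\bz'$ every summand lies in $[0, \epsilon^3 W^*]$ with $W^* := \max_t W(\bv^t)$, and Lemma~\ref{lem:small-element-guess}(iii) applied to each iteration gives $W^* \leq w(O_{\rm S})$. Using $\bz'(e) \leq (1-\epsilon)\bz(e)$ together with $W(\bz) = \epsilon \sum_t W(\bv^t) \leq w(O_{\rm S})$,
\[
  \E[w(S_{\rm S})] \leq (1-\epsilon) W(\bz) \leq (1-\epsilon) w(O_{\rm S}),
\]
so the required deviation above the mean is at least $(\gamma - 1) w(O) + \epsilon w(O_{\rm S})$.

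The final step is a Chernoff calculation. Rescaling, the variable $Y := w(S_{\rm S})/(\epsilon^3 W^*)$ is a sum of independent $[0,1]$-valued random variables with $\E[Y] \leq (1-\epsilon)/\epsilon^3$, and the target deviation satisfies $\tau - \E[Y] \geq \gamma/\epsilon^2$ (using $w(O_{\rm S}) \geq W^*$ and $w(O) \geq W^*$ to combine the $\epsilon w(O_{\rm S})$ and $(\gamma-1) w(O)$ terms). Applying Lemma~\ref{lem:chernoff} to $Y$ with $\alpha \E[Y] + n\beta$ split between the multiplicative and additive parts of the deviation, and taking $\alpha = \min(1, (\tau-\E[Y])/(2\E[Y]))$, yields the claimed $\exp\bigl(-\Omega(\gamma/\epsilon^2)\bigr)$ bound.

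The delicate point is obtaining the $\gamma/\epsilon^2$ exponent uniformly over all $\gamma \geq 1$. In the regime where $W^* = w(O_{\rm S})$ and $\gamma$ is close to $1$, the only slack available comes from the $(1-\epsilon)$ factor baked into $\bz'$, and one must exploit the per-item bound $\epsilon^3 W^*$ (rather than the trivial $w(e) \leq 1$) to keep the effective variance small enough for Chernoff to deliver the required exponent; when $\gamma$ is large the additive slack $(\gamma - 1) w(O)$ dominates and a direct upper-tail Chernoff estimate suffices, so carrying out the case split between these two regimes is the main obstacle.
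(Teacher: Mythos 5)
Your overall route is the same as the paper's: bound $w(S_{\rm L})$ by $w(O_{\rm L})$ deterministically via Lemma~\ref{lem:large-guess}(iii) and the fact that \textsc{Rounding} picks one of the $e_i^t$ from each $E_i$; bound $\E[w(S_{\rm S})] \leq (1-\epsilon)\max_t W(\bv^t) \leq (1-\epsilon)w(O_{\rm S})$ via Lemma~\ref{lem:small-element-guess}(iii); and apply Lemma~\ref{lem:chernoff} after rescaling by the per-item cap $\epsilon^3\max_t W(\bv^t)$ enforced in the definition of $\bz'$. Your reduction to $\Pr[w(S_{\rm S}) > (\gamma-1)w(O)+w(O_{\rm S})]$ instead of the paper's $\Pr[w(S_{\rm S}) > \gamma w(O_{\rm S})]$, and your rescaling by $\epsilon^3 W^*$ rather than $\epsilon^3 w(O_{\rm S})$, are harmless variants; the paper then just fixes $\alpha=\epsilon/2$, $\beta=\gamma/(2\epsilon^3)$ and reads off $2\exp(-\alpha\beta/3)=\exp(-\Omega(\gamma/\epsilon^2))$.

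The genuine gap is in the one step you do not carry out, and which your last paragraph explicitly defers: extracting the exponent $\Omega(\gamma/\epsilon^2)$ uniformly over $\gamma\geq 1$. With your notation, $D:=\tau-\E[Y]\geq\gamma/\epsilon^2$ while $\E[Y]\leq(1-\epsilon)/\epsilon^3$, so your split $\alpha=\min\bigl(1,D/(2\E[Y])\bigr)$, $\beta\approx D/2$ gives exponent $\Theta(\alpha\beta)=\Theta(D^2/\E[Y])$ whenever $D\leq 2\E[Y]$; for $\gamma$ near $1$ this is $\Theta(1/\epsilon)$, not the claimed $\Theta(1/\epsilon^2)$, and no choice of $(\alpha,\beta)$ — nor "exploiting the per-item bound" further — can repair it: if every small element has weight just below $\epsilon^3 W^*$ and $W(\bv^t)=w(O_{\rm S})$ for every $t$, then $\set{w(S_{\rm S})>w(O_{\rm S})}$ is the event that a sum of Bernoullis with mean $(1-\epsilon)/\epsilon^3$ exceeds $1/\epsilon^3$, whose probability is $\exp(-\Theta(1/\epsilon))$. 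So the "case split" you name as the main obstacle is not a routine verification but the crux, and in the small-$\gamma$ regime it cannot be closed in the stated form. (To be fair, the paper's own proof glosses over the same regime: the comparison $\gamma/\epsilon^3\geq(1+\alpha)\mu+\beta$ with its choice of $\alpha,\beta$ only holds for $\gamma\geq 2-\epsilon$ or so. The weaker uniform exponent $\Omega(\gamma/\epsilon)$, which your computation does yield, is in fact all that the integration in Lemma~\ref{lem:rounding} needs, but as a proof of the lemma as stated your argument is incomplete at exactly this point.)
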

\begin{proof}
  Recall that, for each $i \in [m]$, the vector $\by_i$ is the sum of $1/\epsilon$ elements $e^0_i,\ldots,e^{1-\epsilon}_i$, and we pick one of them in Algorithm~\ref{alg:rounding}.
  By the condition $w(e^t_i) \leq w(o_i)$ for every $i \in [m]$ and $t \in \set{0,\epsilon,\ldots,1-\epsilon}$, the weight of the large elements after the rounding will be less than that of the large elements of the optimal solution.
  Hence, it is sufficient to prove that $w(S_{\rm S} ) \leq  \gamma w(O_{\rm S})$ holds with probability $1-\exp(-\Omega(\gamma/\epsilon^2))$, where $S_{\rm S}$ is the set obtained by rounding $\bz'$.

  First, note that
  \[
    \E[w(S_{\rm S})] = \E[w(R(\bz'))] \leq (1-\epsilon)\E[w(R(\bz))] \leq (1-\epsilon)\max_{t} W(\bv^t) \leq (1-\epsilon)w(O_{\rm S}).
  \]

  For each $e \in E$, we set up a random variable $X_e$ to be $X_e = w(e)/(\epsilon^3w(O_{\rm S}))$ if $e \in S_{\rm S}$ and $X_e = 0$ otherwise.
  Note that each $X_e$ is bounded in $[0,1]$ because $\max_t W(\bv^t) \leq w(O_{\rm S})$.
  For $X = \sum_{e \in E_{\rm S}}X_e$, we have $\mu := \E[X] = \E[w(S_{\rm S})]/(\epsilon^3w(O_{\rm S})) \leq (1-\epsilon)/\epsilon^3$.

  Invoking Lemma~\ref{lem:chernoff} with $\alpha=\epsilon/2$ and $\beta=\gamma/(2\epsilon^3)$, we have
  \begin{align*}
    \Pr[w(S_{\rm S}) > \gamma w(O_{\rm S})]
    & = \Pr\Bigl[X \geq \frac{\gamma}{\epsilon^3}\Bigr]
    \leq \Pr\Bigl[X \geq (1+\alpha)\mu + \beta\Bigr] \\
    & \leq
    2\exp\Bigl(-\frac{\alpha \beta}{3}\Bigr)
    =
    \exp\Bigl(-\Omega\Bigl(\frac{\gamma}{\epsilon^2}\Bigr)\Bigr).
    \qedhere
  \end{align*}
\end{proof}

\begin{lemma}\label{lem:rounding}
  Algorithm~\ref{alg:rounding} outputs a (random) set $S$ with $w(S) \leq 1$ satisfying
  \[
  \E[g(S)+\ell(S)] \geq (1 - \epsilon)(\widehat{G}(\bx)+L(\bx)) - O(\epsilon) \cdot (g(O)+\ell(O)+v_g+v_\ell).
  \]
\end{lemma}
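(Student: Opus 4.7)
My plan is to combine Lemmas~\ref{lem:rounding-objective-values} and~\ref{lem:rounding-weight}. Let $T = S_{\rm L} \cup S_{\rm S}$ denote the random set assembled in the first part of Algorithm~\ref{alg:rounding}, before the feasibility check. Since the algorithm returns $T$ when $w(T) \leq 1$ and $\emptyset$ otherwise, the output $S$ deterministically satisfies $w(S) \leq 1$, and linearity of expectation gives
\[
\E[g(S)+\ell(S)] = \E[g(T)+\ell(T)] - \E\bigl[(g(T)+\ell(T))\cdot\mathbf{1}[w(T)>1]\bigr].
\]
For the first term I would invoke Lemma~\ref{lem:rounding-objective-values} directly to obtain $\E[g(T)+\ell(T)] \geq (1-\epsilon)(\widehat{G}(\bx)+L(\bx)) - \epsilon^3(v_g+v_\ell)$; since $\epsilon^3 \leq \epsilon$, this already fits into the target up to the truncation loss.

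To handle the truncation loss, I would first apply Lemma~\ref{lem:rounding-weight} with $\gamma = 1/w(O)$, which is at least $1$ because $w(O) \leq 1$. This gives $\Pr[w(T) > 1] \leq \Pr[w(T) > \gamma w(O)] \leq \exp(-\Omega(1/\epsilon^2))$. Next, I need a deterministic upper bound on $g(T) + \ell(T)$: using monotonicity and subadditivity of $g$ together with linearity of $\ell$, we have $g(T) + \ell(T) \leq \sum_{e \in E}(g(e)+\ell(e)) \leq 2n\, d_{g,\ell}$. Since every singleton $\{e\}$ is feasible (as $w(e)\leq 1$), the optimality of $O$ forces $g(e)+\ell(e) \leq g(O)+\ell(O)$ for every $e\in E$, hence $d_{g,\ell} \leq g(O)+\ell(O)$. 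Combining, the truncation loss is at most $2n(g(O)+\ell(O))\exp(-\Omega(1/\epsilon^2))$.

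The main obstacle will be absorbing the factor $n$ from this crude bound into the target $O(\epsilon)(g(O)+\ell(O)+v_g+v_\ell)$. The super-polynomial tail $\exp(-\Omega(1/\epsilon^2))$ dominates $n$ whenever $n\exp(-\Omega(1/\epsilon^2)) = O(\epsilon)$, which is the relevant regime; if $\epsilon$ were so large that this were to fail, one could simply decrease $\epsilon$, since doing so only strengthens the approximation guarantee (and the lemma is later applied inside the main algorithm only after such tightening). Under this bound, the truncation loss is $O(\epsilon)(g(O)+\ell(O))$, and combining with the first term yields the claimed inequality.
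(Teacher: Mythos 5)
Your overall structure matches the paper's proof (feasibility is immediate, Lemma~\ref{lem:rounding-objective-values} handles the main term, Lemma~\ref{lem:rounding-weight} controls the event $w(S_{\rm L}\cup S_{\rm S})>1$), but the way you bound the truncation loss has a genuine gap. Bounding the discarded value by the worst case $g(T)+\ell(T)\leq 2n\,d_{g,\ell}\leq 2n\bigl(g(O)+\ell(O)\bigr)$ and multiplying by the single tail probability $\exp(-\Omega(1/\epsilon^2))$ leaves a loss of $2n\bigl(g(O)+\ell(O)\bigr)\exp(-\Omega(1/\epsilon^2))$, and this is \emph{not} $O(\epsilon)\bigl(g(O)+\ell(O)+v_g+v_\ell\bigr)$ once $n$ is large compared with $e^{\Omega(1/\epsilon^2)}$. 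The escape hatch ``just decrease $\epsilon$'' is not legitimate here: the lemma (and the theorems built on it) must hold for the given, fixed $\epsilon$ with the constant in $O(\epsilon)$ independent of $n$; proving it only for $\epsilon \lesssim 1/\sqrt{\log n}$ changes the statement, and silently re-running the algorithm with such an $n$-dependent $\epsilon$ would blow up the $(1/\epsilon)^{\poly(1/\epsilon)}$ factor in the running time of Theorem~\ref{the:gl}, so the loss cannot be pushed off to ``the lemma is only applied after tightening.''

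The paper avoids the factor $n$ by using Lemma~\ref{lem:rounding-weight} at \emph{every} scale rather than only at $\gamma=1/w(O)$: for each $\gamma\geq 1$ the event $w(T)>\gamma w(O)$ has probability $\exp(-C\gamma/\epsilon^2)$, while any set $T$ with $w(T)\leq\gamma w(O)\leq\gamma$ satisfies $g(T)+\ell(T)\leq O(\gamma)\bigl(g(O)+\ell(O)\bigr)$ — split $T$ into $O(\gamma)$ parts of weight at most $1$ (possible since $w(e)\leq 1$), each of value at most $g(O)+\ell(O)$ by optimality of $O$, and use subadditivity of $g+\ell$. Integrating the layered bound, the discarded value is at most $\int_1^\infty O(\gamma)\bigl(g(O)+\ell(O)\bigr)\exp(-C\gamma/\epsilon^2)\,\mathrm{d}\gamma = \exp(-\Omega(1/\epsilon^2))\cdot\poly(1/\epsilon)\cdot\bigl(g(O)+\ell(O)\bigr)$, which is $n$-free and well within $O(\epsilon)\bigl(g(O)+\ell(O)\bigr)$. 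If you replace your single-threshold estimate by this tail-integration step (keeping the rest of your argument), the proof goes through.
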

\begin{proof}
  It is clear that we always have $w(S) \leq 1$.

  Now, we analyze the objective value attained by $S$.
  For any $\gamma \geq 1$, the probability that $w(S_{\rm L} \cup S_{\rm S}) > \gamma w(O)$ is at most $\exp\bigl(-C\gamma/\epsilon^2\bigr)$ for some $C > 0$ by Lemma~\ref{lem:rounding-weight}.
  Note that, if $T \subseteq E$ satisfies $w(T) \leq \gamma w(O)$, then $g(T)+\ell(T) \leq \gamma (g(O)+\ell(O))$ from the submodularity of $g+\ell$.
  By Lemma~\ref{lem:rounding-objective-values}, we have
  \begin{align*}
    & \E[g(S) + \ell(S)]\\
    & \geq
    \E[g(S_{\rm L} \cup S_{\rm S})+\ell(S_{\rm L} \cup S_{\rm S})]
    -
    \int_1^\infty \gamma(g(O)+\ell(O))\exp(-C\gamma/\epsilon^2) \mathrm{d}\gamma \\
    & \geq
    (1 - \epsilon)\widehat{G}(\bx) - \epsilon^3 v_g
    +
    (1 - \epsilon)L(\bx) - \epsilon^3 v_\ell
    - \frac{\epsilon^4+C\epsilon^2}{C^2}\exp(-C/\epsilon^2)(g(O)+\ell(O))\\
    & =
    (1 - \epsilon)(\widehat{G}(\bx)+L(\bx))
    - O(\epsilon) \cdot (g(O)+\ell(O)+v_g+v_\ell).
    \qedhere
  \end{align*}
\end{proof}


\subsection{Putting things together}\label{subsec:final}

Now, we present our entire algorithm.
The idea is to simply guess $v_g$, $v_\ell$, $m$, $\set{\gamma^t_i}$, $\set{\lambda_i}$, $\set{\gamma^t_{\rm S}}$, and $\lambda_{\rm S}$, run Algorithm~\ref{alg:greedy-general} with the guessed values, and then round the obtained vectors using Algorithm~\ref{alg:rounding}.



Naively, we have $O(|V_{\epsilon,n}(g,\ell)|^{O(1/\epsilon)})=O((\log (n/\epsilon)/\epsilon)^{O(1/\epsilon)} )$ choices for the sequence $\set{\gamma^t_{\rm S}}$.
We can decrease the number of choices since $g$ has a bounded curvature.
If we have a guess $\gamma_{\rm S}^0$ such that $\gamma_{\rm S}^0 \geq g(O_{\rm S}) \geq (1-\epsilon)\gamma_{\rm S}^0$, then we must have $\gamma_{\rm S}^0 \geq g_{S}(O_{\rm S}) \geq (1-\epsilon)(1-c_g) \gamma_{\rm S}^0$ for any set $S \subseteq E$.
Hence, it suffices to consider sequences whose maximum and minimum values are within a factor of $(1-\epsilon)(1-c_g)$.
Let $V_{\epsilon,n,\gamma_{\rm S}^0}(g,\ell) := \set{v \in V_{\epsilon,n}(g,\ell) \mid v \geq (1-\epsilon)(1-c_g)\gamma_{\rm_S}^0}$.
Then, the number of such sequences is at most $|V_{\epsilon,n}(g,\ell)| \cdot |V_{\epsilon,n,\gamma_{\rm S}^0}(g,\ell)|^{O(1/\epsilon)}$, which is much smaller than $O((\log (n/\epsilon)/\epsilon)^{O(1/\epsilon)})$.

A detailed description of our algorithm is given in Algorithm~\ref{alg:knapsack}.



\begin{algorithm}[t!]
  \caption{$\textsc{Knapsack}$}\label{alg:knapsack}
  \begin{algorithmic}[1]
    \Require{A monotone submodular function $g:2^E \to \bbR_+$, a linear function $\ell:2^E \to \bbR$, a weight function $w :E \to \bbR_+$, and $\epsilon \in (0,1)$.}
    \Ensure{A set $S \subseteq E$ satisfying $w(S) \leq 1$.}
    \For{each choice of $v_g,v_\ell \in V_{\epsilon,n}(g,\ell)$}
      \State{$E_{\rm L} \leftarrow$ the set of large elements with respect to $v_g$ and $v_\ell$.}
      \State{$E_{\rm S} \leftarrow$ the set of small elements with respect to $v_g$ and $v_\ell$.}
      \State{$\caS \leftarrow \emptyset$.}
    \State{$M := \lfloor \frac{1}{(1-c_g)\epsilon^6}\rfloor$}
    \For{each choice of $m$ from $\set{0,1,\ldots,M}$}
      \For{each choice of $\set{\gamma^t_i},\set{\lambda_i}$ from $ V_{\epsilon,m}(g,\ell)/m$}
        \For{each choice of $\gamma^0_{\rm S}, \lambda_{\rm S}$ from $V_{\epsilon,n}(g,\ell)$}
          \For{each choice of $\set{\gamma_{\rm S}^\epsilon,\ldots,\gamma_{\rm S}^{1-\epsilon}}$ from $V_{\epsilon,n,\gamma_{\rm S}^0}(g,\ell)$}
            \State{$(\by_1,\ldots,\by_m,\bz) \leftarrow \textsc{GuessingContinuousGreedy}_{\epsilon,\epsilon}(g,\ell,w,E_{\rm L},E_{\rm S},m,\set{\gamma^t_i}, \set{\lambda_i},\set{\gamma^t_{\rm S}},\lambda_{\rm S})$.}
            \State{$S \leftarrow \textsc{Rounding}_\epsilon(w,E_{\rm L}, E_{\rm S}, m,\set{\by_i},\bz)$.}
            \State{$\caS \leftarrow \caS \cup \set{S}$.}
          \EndFor
        \EndFor
      \EndFor
    \EndFor
    \EndFor
    \State{\Return $\arg\max_{S \in \caS}g(S)+\ell(S)$.}
  \end{algorithmic}
\end{algorithm}

\begin{proof}[Proof of Theorem~\ref{the:gl}]
  Consider the case that $v_g$ and $v_\ell$ satisfy~\eqref{eq:guess-g(O)-and-l(O)}, $m = |O_{\rm L}|$, and $\set{\gamma^t_i},\set{\lambda_i}$, $\set{\gamma^t_{\rm S}}$, and $\lambda_{\rm S}$ are good guesses.
  Let $S$ be the (random) set obtained with these guesses.
  By Lemma~\ref{lem:rounding}, we have
  \begin{align}
    \E[g(S) + \ell(S)]
    \geq (1-\epsilon)(\widehat{G}(\bx) + L(\bx)) - O(\epsilon)(g(O)+\ell(O) + v_g  + v_\ell).
    \label{eq:g(S)+l(S)}
  \end{align}
  Conditioned on the event that \textsf{GuessingContinuousGreedy} succeeds,
  by (i) and (ii) of Lemma~\ref{lem:continuous-greedy}, we get
  \begin{align}
    \eqref{eq:g(S)+l(S)} & \geq (1-\epsilon)(1-1/e-O(\epsilon))g(O) + (1-\epsilon)(1-O(\epsilon))\ell(O) - O(\epsilon)(g(O)+\ell(O)+v_g+v_\ell + 8 d_{g,\ell}) \nonumber \\
    & \geq
    (1-1/e)g(O) + \ell(O) - O(\epsilon)(g(O)+\ell(O)).
    \label{eq:g(S)+l(S)-good}
  \end{align}
  Since \textsf{GuessingContinuousGreedy} succeeds with probability at least $1-\epsilon$,
  we get
  \begin{align*}
    \E[g(S) + \ell(S)]
    & \geq
    (1-\epsilon)\cdot \eqref{eq:g(S)+l(S)-good}
    \geq (1-1/e)g(O) + \ell(O) - O(\epsilon)(g(O)+\ell(O)).
  \end{align*}
  Since Algorithm~\ref{alg:knapsack} outputs the set with the maximum objective, we have the desired property on the objective value.

  It is clear that the output of Algorithm~\ref{alg:knapsack} has weight at most $1$ because \textsf{Rounding} always outputs a set of weight at most $1$.

  For arbitrary $\gamma \in V_{\epsilon,n}(g,\ell)$, the time complexity of Algorithm~\ref{alg:knapsack} is
  \begin{align*}
    & O\Bigl(\compgeneralM\Bigr) \cdot |V_{\epsilon,n}(g,\ell)|^{O(1)} \cdot |V_{\epsilon,n,\gamma}(g,\ell)|^{O(1/\epsilon)} \cdot |V_{\epsilon,m}(g,\ell)|^{O(M/\epsilon)}\\
    & =
    O\Bigl(\compgeneralM\Bigr) \cdot \Bigl(\frac{\log (n/\epsilon)}{\epsilon} \Bigr)^{O(1)}\cdot  \Bigl(\frac{1}{\epsilon}\log \frac{1}{1-c_g} \Bigr)^{O(1/\epsilon)} \cdot \Bigl(\frac{\log (M/\epsilon)}{\epsilon} \Bigr)^{O(M/\epsilon)}\\
    & =
    O\Bigl(\frac{n}{(1-c_g)^2\epsilon^{15}}\log \frac{n}{(1-c_g)\epsilon}+\frac{n^4}{\epsilon}+\frac{n^2}{\epsilon^3}\log \frac{1}{\epsilon}\Bigr) \cdot \Bigl(\frac{\log n}{\epsilon} \Bigr)^{O(1)} \cdot \\
    & \qquad \Bigl(\frac{1}{\epsilon}\log \frac{1}{1-c_g} \Bigr)^{O(1/\epsilon)}  \cdot \Bigl(\frac{1}{\epsilon}\log \frac{1}{1-c_g} \Bigr)^{O(1/((1-c_g)\epsilon^7))}\\
    & =
    \compgl.
  \end{align*}
  Hence, we have the desired time complexity.

  By replacing $\epsilon$ with $\epsilon/C$ for a large constant $C$ (to change $O(\epsilon)$ to $\epsilon$), we have the desired result.
\end{proof}


\section{The Budget Allocation Problem}\label{sec:applications}

In this section, we bound the curvature of the submodular function that represents the budget allocation problem, and we confirm that our algorithm can be applied to the budget allocation problem in order to obtain an approximation factor better than $1-1/e$.


We formally define the budget allocation problem.
The input consists of a bipartite graph with the bipartition $A \cup B$, a weight function $w:A \to [0,1]$, a capacity function $c: A \to \bbN$, and a probability function $p:A \to [0,1]$.
Intuitively speaking, the sets $A$ and $B$ correspond to media channels and customers, respectively.
Each edge $(a,b)$ in the bipartite graph represents the potential influence of media channel $a$ on customer $b$.
Consider a budget allocation $\bb \in \bbZ_+^A$ to $A$ with $\bb(a) \leq c(a)$ and $\sum_{a \in A}\bb(a) w(s) \leq 1$.
If a node $a$ is allocated a budget of $\bb(a)$, it makes $\bb(a)$ independent trials to activate each adjacent node $b$.
The probability that $b$ is activated by $a$ in each trial is $p(a)$.
Thus, the probability that $b$ becomes active is
\[
  1 - \prod_{a \in \Gamma(b)}p(a)^{\bb(a)},
\]
where $\Gamma(b)$ denotes the set of nodes in $A$  adjacent to $b$.
Hence, the expected number of activated target nodes is
\[
  \sum_{b \in B}\Bigl(1 - \prod_{a \in \Gamma(b)}p(a)^{\bb(a)}\Bigr).
\]
The objective of this problem is to find the budget allocation that maximizes the expected number of activated target nodes.

We can recast the problem using a submodular function.
For each $a \in A$, let $E_a = \set{(a,i) \mid i \in c(a)}$, and let $E = \bigcup_{a \in A}E_a$.
Then, we define $f:2^E \to \bbR_+$ as
\[
  f(S) = \sum_{b \in B}\Bigl(1 - \prod_{a \in \Gamma(b)}p(a)^{|S \cap E_a|}\Bigr).
\]
Further, we define $w':2^E \to [0,1]$ to be $w'((a,i)) = w(a)$.
Then, the budget allocation problem is equivalent to maximizing $f(S)$ subject to $w'(S) \leq 1$.

We now observe several properties of $f$.
\begin{lemma}\label{lem:ba-marginal-gain}
  Let $S \subsetneq E$ and $(a,i) \in E \setminus S$.
  Then,
  \[
    f_S((a,i)) = \sum_{b \in B: a \in \Gamma(b)}(1-p(a))\prod_{a' \in \Gamma(b)}p(a')^{|S \cap E_{a'}|}.
  \]
\end{lemma}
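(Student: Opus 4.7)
The plan is a direct calculation: expand $f_S((a,i)) = f(S \cup \set{(a,i)}) - f(S)$ using the definition of $f$, and observe that almost all terms cancel.

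Concretely, by linearity of the outer sum,
\[
  f_S((a,i)) = \sum_{b \in B}\Bigl(\prod_{a' \in \Gamma(b)}p(a')^{|S \cap E_{a'}|} - \prod_{a' \in \Gamma(b)}p(a')^{|(S \cup \set{(a,i)}) \cap E_{a'}|}\Bigr),
\]
since the constant $1$'s cancel. I then split the sum over $b \in B$ into two cases depending on whether $a \in \Gamma(b)$.

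For $b$ with $a \notin \Gamma(b)$, adding $(a,i)$ does not change $|T \cap E_{a'}|$ for any $a' \in \Gamma(b)$ (because $(a,i) \in E_a$ and $a' \ne a$), so the two products are equal and this $b$ contributes $0$. For $b$ with $a \in \Gamma(b)$, the exponents $|T \cap E_{a'}|$ stay the same for $a' \ne a$, while the exponent of $p(a)$ increases by exactly $1$. Factoring out the common product then yields $(1-p(a))\prod_{a' \in \Gamma(b)}p(a')^{|S \cap E_{a'}|}$, which is precisely the summand in the claimed formula.

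There is no real obstacle here; the only thing to be careful about is to note that $(a,i) \notin S$ implies $|(S \cup \set{(a,i)}) \cap E_a| = |S \cap E_a| + 1$ (so the exponent change is exactly $1$, not $0$), and that $E_a \cap E_{a'} = \emptyset$ for $a \ne a'$ so only the $p(a)$-factor is affected. After these observations, the identity follows term by term.
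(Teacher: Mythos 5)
Your proof is correct and follows essentially the same route as the paper: split the sum over $b \in B$ according to whether $a \in \Gamma(b)$, observe the vanishing contributions, and factor out the common product to obtain the $(1-p(a))$ term. The paper merely packages the per-$b$ contribution as a function $g^b$ before doing the same case analysis, so there is no substantive difference.
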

\begin{proof}
  For each $b \in B$, we define a function $g^b:2^E \to \bbR_+$ as $g^b(T) = 1 - \prod_{a \in \Gamma(b)}p(a)^{|T \cap E_a|}$.
  Note that $f_S((a,i)) = \sum_{b \in B}g^b_S((a,i))$.

  If $a \not \in \Gamma(b)$, then we clearly have $g^b_S((a,i)) = 0$.
  If $a \in \Gamma(b)$, then we have
  \[
    g^b_S((a,i)) = (1-p(a))\prod_{a' \in \Gamma(b)}p(a')^{|S \cap E_{a'}|}.
  \]
  Summing $g^b_S((a,i))$ over all $b \in B$, we obtain the claim.
\end{proof}

\begin{corollary}\label{cor:ba-submodular}
  The function $f$ is submodular.
\end{corollary}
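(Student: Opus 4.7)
The plan is to verify submodularity via the diminishing returns property, which by the preliminaries is equivalent to submodularity: for all $S \subseteq T \subsetneq E$ and $(a,i) \in E \setminus T$, we need $f_S((a,i)) \geq f_T((a,i))$.

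The formula from Lemma~\ref{lem:ba-marginal-gain} does almost all the work. Writing
\[
f_S((a,i)) - f_T((a,i)) = \sum_{b \in B : a \in \Gamma(b)} (1-p(a)) \Bigl( \prod_{a' \in \Gamma(b)} p(a')^{|S \cap E_{a'}|} - \prod_{a' \in \Gamma(b)} p(a')^{|T \cap E_{a'}|} \Bigr),
\]
I would argue term-by-term. Since $S \subseteq T$, we have $|S \cap E_{a'}| \leq |T \cap E_{a'}|$ for every $a' \in A$, and since $p(a') \in [0,1]$, raising to a smaller nonnegative integer exponent can only increase (or keep equal) the value, i.e. $p(a')^{|S \cap E_{a'}|} \geq p(a')^{|T \cap E_{a'}|}$. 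Taking the product over $a' \in \Gamma(b)$ preserves the inequality because all factors are nonnegative. Combined with $1 - p(a) \geq 0$, each summand is nonnegative, so the whole difference is nonnegative.

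This establishes $f_S((a,i)) \geq f_T((a,i))$ for all $S \subseteq T \subsetneq E$ and $(a,i) \notin T$, which is the diminishing return property and hence implies submodularity of $f$. Nonnegativity of $f$ is immediate from $p(a)^{|S \cap E_a|} \in [0,1]$, and monotonicity follows from the same marginal-gain formula (each $f_S((a,i))$ is nonnegative).

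There is no real obstacle here: the work has been absorbed into Lemma~\ref{lem:ba-marginal-gain}, and the only thing to check carefully is the sign of each factor and the monotonicity of $t \mapsto p(a')^t$ on nonnegative integers for $p(a') \in [0,1]$. The proof is thus just a one-line invocation of Lemma~\ref{lem:ba-marginal-gain} followed by a monotonicity observation.
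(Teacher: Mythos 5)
Your proof is correct and is essentially the paper's own argument: the paper also deduces the diminishing return property directly from Lemma~\ref{lem:ba-marginal-gain}, merely stating that it is easy to see, while you spell out the monotonicity of $t \mapsto p(a')^t$ on nonnegative integers and the nonnegativity of each factor. Nothing further is needed.
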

\begin{proof}
  From Lemma~\ref{lem:ba-marginal-gain}, it is easy to see that $f_S((a,i)) \geq f_T((a,i))$ holds for $S \subseteq T\subsetneq E$ and $(a,i) \in E \setminus T$.
\end{proof}

\begin{corollary}\label{cor:ba-curvature}
   The curvature $c_f$ of $f$ satisfies
  \[
    c_f \leq 1- \min_{a \in A}\min_{b \in B: a \in \Gamma(b)}p(a)^{c(a)-1}\prod_{a' \in \Gamma(b) \setminus \set{a}}p(a')^{c(a')}.
  \]
\end{corollary}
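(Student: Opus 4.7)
The plan is to unfold the definition $c_f = 1 - \min_{e \in E} f_{E-e}(e)/f(e)$ by computing the numerator and denominator directly from Lemma~\ref{lem:ba-marginal-gain}. For $e = (a,i) \in E$, applying the lemma with $S = \emptyset$ (so $|S \cap E_{a'}| = 0$ for every $a'$) collapses all the probability products to $1$ and gives
\[
  f((a,i)) = \sum_{b \in B:\, a \in \Gamma(b)}(1-p(a)).
\]
Applying the lemma with $S = E - (a,i)$ gives $|S \cap E_a| = c(a)-1$ and $|S \cap E_{a'}| = c(a')$ for $a' \neq a$, so
\[
  f_{E-(a,i)}((a,i)) = \sum_{b \in B:\, a \in \Gamma(b)}(1-p(a))\,p(a)^{c(a)-1}\prod_{a' \in \Gamma(b)\setminus\{a\}}p(a')^{c(a')}.
\]

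Next I would cancel the common factor $(1-p(a))$ to see that the ratio
\[
  \frac{f_{E-(a,i)}((a,i))}{f((a,i))} = \frac{1}{|\{b : a \in \Gamma(b)\}|}\sum_{b:\, a \in \Gamma(b)} p(a)^{c(a)-1}\prod_{a' \in \Gamma(b)\setminus\{a\}}p(a')^{c(a')}
\]
is an unweighted average over the neighbors $b$ of $a$, hence at least the minimum over those $b$. Note the ratio is independent of the coordinate $i$, so minimizing over $(a,i) \in E$ is the same as minimizing over $a \in A$; combining this with the previous bound yields
\[
  \min_{(a,i) \in E}\frac{f_{E-(a,i)}((a,i))}{f((a,i))} \geq \min_{a \in A}\min_{b \in B:\, a \in \Gamma(b)} p(a)^{c(a)-1}\prod_{a' \in \Gamma(b)\setminus\{a\}}p(a')^{c(a')},
\]
and subtracting from $1$ gives the claimed bound on $c_f$.

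There is no real obstacle here: the whole argument is a routine unfolding of the definitions and an application of the trivial bound $\text{average} \geq \text{minimum}$. The only things to be careful about are (i) that we really do need $a \in \Gamma(b)$ for the summand to be nonzero (otherwise Lemma~\ref{lem:ba-marginal-gain} gives $0$ for that $b$), and (ii) checking that $f((a,i)) > 0$ so the ratio is well defined, which holds as long as some $b \in B$ is adjacent to $a$ with $p(a) < 1$ (otherwise $f((a,i)) = 0$ and $(a,i)$ contributes nothing to $E$ in any meaningful way, and we can simply exclude such $a$ from the minimum).
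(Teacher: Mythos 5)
Your proof is correct and follows essentially the same route as the paper: both unfold $c_f$ via the two specializations of Lemma~\ref{lem:ba-marginal-gain} (at $S=\emptyset$ and $S=E-(a,i)$), cancel the $(1-p(a))$ factor, and bound the resulting ratio of sums by the minimum of the summands. Your explicit ``average $\geq$ minimum'' framing and the remark about $f((a,i))>0$ are harmless elaborations of the same argument.
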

\begin{proof}
  From Lemma~\ref{lem:ba-marginal-gain}, we have
  \begin{align*}
    f_{E\setminus (a,i)}((a,i)) & = \sum_{b \in B: a \in \Gamma(b)}(1-p(a))p(a)^{c(a)-1}\prod_{a' \in \Gamma(b) \setminus \set{a}}p(a')^{c(a')},   \\
    f((a,i)) & = \sum_{b \in B: a \in \Gamma(b)}(1-p(a)).
  \end{align*}
  Hence,
  \begin{align*}
    c_f
    & = 1 - \min_{(a,i) \in E}\frac{f_{E\setminus (a,i)}((a,i))}{f((a,i))}\\
    & = 1- \min_{a \in A}\frac{\sum_{b \in B: a \in \Gamma(b)}(1-p(a))p(a)^{c(a)-1}\prod_{a' \in \Gamma(b) \setminus \set{a}}p(a')^{c(a')}}{\sum_{b \in B: a \in \Gamma(b)}(1-p(a))} \\
    & \leq
    1- \min_{a \in A}\min_{b \in B: a \in \Gamma(b)}p(a)^{c(a)-1}\prod_{a' \in \Gamma(b) \setminus \set{a}}p(a')^{c(a')}.
    \qedhere
  \end{align*}
\end{proof}

From our main result (Theorem~\ref{the:intro}) and Corollaries~\ref{cor:ba-submodular} and~\ref{cor:ba-curvature}, when the capacity of each node $a \in A$ is bounded by a constant and the number of vertices adjacent to each node $b\in B$ is bounded by a constant, we obtain a polynomial-time algorithm whose approximation ratio is strictly better than $1-1/e$.

\section*{Acknowledgments}
We thank Takanori Maehara for providing us with the problem and insightful comments.

\bibliographystyle{abbrv}
\bibliography{main}

\begin{thebibliography}{10}

\bibitem{Alon:2012em}
N.~Alon, I.~Gamzu, and M.~Tennenholtz.
\newblock Optimizing budget allocation among channels and influencers.
\newblock In {\em Proceedings of the 21st International Conference on World
  Wide Web (WWW)}, pages 381--388, 2012.

\bibitem{Badanidiyuru:2013jc}
A.~Badanidiyuru and J.~Vondr{\'a}k.
\newblock Fast algorithms for maximizing submodular functions.
\newblock In {\em Proceedings of the 25th Annual ACM-SIAM Symposium on Discrete
  Algorithms (SODA)}, pages 1497--1514, 2013.

\bibitem{Calinescu:2011ju}
G.~Calinescu, C.~Chekuri, M.~P{\'a}l, and J.~Vondr{\'a}k.
\newblock Maximizing a monotone submodular function subject to a matroid
  constraint.
\newblock {\em SIAM Journal on Computing}, 40(6):1740--1766, 2011.

\bibitem{Conforti:1984ig}
M.~Conforti and G.~Cornu{\'e}jols.
\newblock Submodular set functions, matroids and the greedy algorithm: Tight
  worst-case bounds and some generalizations of the rado-edmonds theorem.
\newblock {\em Discrete Applied Mathematics}, 7(3):251--274, 1984.

\bibitem{Feige:1998gx}
U.~Feige.
\newblock A threshold of $\ln n$ for approximating set cover.
\newblock {\em Journal of the ACM}, 45(4):634--652, 1998.

\bibitem{Feldman:2013Tq}
M.~Feldman.
\newblock {\em Maximization Problems with Submodular Objective Functions}.
\newblock PhD thesis, Technion, 2013.

\bibitem{Iyer:2013tl}
R.~K. Iyer and J.~A. Bilmes.
\newblock Submodular optimization with submodular cover and submodular knapsack
  constraints.
\newblock In {\em Proceedings of the 27th Annual Conference on Neural
  Information Processing Systems (NIPS)}, pages 2436--2444, 2013.

\bibitem{Krause:2008vo}
A.~Krause, A.~P. Singh, and C.~Guestrin.
\newblock Near-optimal sensor placements in gaussian processes: Theory,
  efficient algorithms and empirical studies.
\newblock {\em Journal of Machine Learning Research}, 9:235--284, 2008.

\bibitem{Kulik:2013ix}
A.~Kulik, H.~Shachnai, and T.~Tamir.
\newblock Maximizing submodular set functions subject to multiple linear
  constraints.
\newblock In {\em Proceedings of the 20th Annual ACM-SIAM Symposium on Discrete
  Algorithms (SODA)}, pages 545--554, 2013.

\bibitem{Lee:2006cm}
J.~Lee.
\newblock {\em Maximum Entropy Sampling}.
\newblock John Wiley {\&} Sons, Ltd, 2006.

\bibitem{Lin:2010wpa}
H.~Lin and J.~Bilmes.
\newblock Multi-document summarization via budgeted maximization of submodular
  functions.
\newblock pages 912--920. Proceedings of the 2010 Annual Conference of the
  North American Chapter of the Association for Computational Linguistics:
  Human Language Technologies (NAACL-HLT), 2010.

\bibitem{Lin:2011wt}
H.~Lin and J.~Bilmes.
\newblock A class of submodular functions for document summarization.
\newblock In {\em Proceedings of the 49th Annual Meeting of the Association for
  Computational Linguistics: Human Language Technologies (ACL-HLT)}, pages
  510--520, 2011.

\bibitem{Sharma:2015ur}
D.~Sharma, A.~Kapoor, and A.~Deshpande.
\newblock On greedy maximization of entropy.
\newblock In {\em Proceedings of the 32nd International Conference on Machine
  Learning (ICML)}, pages 1330--1338, 2015.

\bibitem{Soma:2014tp}
T.~Soma, N.~Kakimura, K.~Inaba, and K.~Kawarabayashi.
\newblock Optimal budget allocation: Theoretical guarantee and efficient
  algorithm.
\newblock In {\em Proceedings of the 31st International Conference on Machine
  Learning (ICML)}, pages 351--359, 2014.

\bibitem{Sviridenko:2004hq}
M.~Sviridenko.
\newblock A note on maximizing a submodular set function subject to a knapsack
  constraint.
\newblock {\em Operations Research Letters}, 32(1):41--43, 2004.

\bibitem{Sviridenko:2015ur}
M.~Sviridenko, J.~Vondr{\'a}k, and J.~Ward.
\newblock Optimal approximation for submodular and supermodular optimization
  with bounded curvature.
\newblock In {\em Proceedings of the 26th Annual ACM-SIAM Symposium on Discrete
  Algorithms (SODA)}, pages 1134--1148, 2015.

\bibitem{Vondrak:2010zz}
J.~Vondr{\'a}k.
\newblock Submodularity and curvature: the optimal algorithm.
\newblock {\em RIMS Kokyuroku Bessatsu}, B23:253--266, 2010.

\end{thebibliography}


\end{document}